\theoremstyle{plain}
\newtheorem{theorem}{Theorem}[section]
\theoremstyle{definition}
\theoremstyle{remark}
\newcommand{\ket}[1]{\ensuremath{|#1 \rangle}}
\newcommand{\bra}[1]{\ensuremath{\langle #1 |}}
\newcommand{\U}{\bm{U}(\bm{\theta})}
\newcommand{\Uad}{\bm{U}^{\dagger}(\bm{\theta})}
\newcommand{\fQ}{f_{\theta}}
\title{Learning To Optimize Quantum Neural Network Without Gradients}
\author{\IEEEauthorblockN{Ankit Kulshrestha}
\IEEEauthorblockA{\textit{Department of Computer and Information Sciences}) \\
\textit{University of Delaware}\\
Newark, USA\\
akulshr@udel.edu}
\and
\IEEEauthorblockN{Xiaoyuan Liu}
\IEEEauthorblockA{
\textit{Fujitsu Research of America}\\
Sunnyvale, USA \\
xliu@fujitsu.com}
\and
\IEEEauthorblockN{Hayato Ushijima-Mwesigwa}
\IEEEauthorblockA{
\textit{Fujitsu Research of America}\\
Sunnyvale, USA \\
hayato@fujitsu.com}
\and
\IEEEauthorblockN{Ilya Safro}
\IEEEauthorblockA{\textit{Department of Computer and Information Sciences} \\
\textit{University of Delaware}\\
Newark, US \\
isafro@udel.edu}}
\begin{document}
\maketitle
\begin{abstract}
Quantum Machine Learning is an emerging sub-field in machine learning where one of the goals is to perform pattern recognition tasks by encoding data into quantum states. This extension from classical to quantum domain has been made possible due to the development of hybrid quantum-classical algorithms that allow a parameterized quantum circuit to be optimized using gradient based algorithms that run on a classical computer. The similarities in training of these hybrid algorithms and classical neural networks has further led to the development of Quantum Neural Networks (QNNs). 
However, in the current training regime for QNNs, the gradients w.r.t objective function have to be computed on the quantum device. This computation is highly non-scalable and is affected by hardware and sampling noise present in the current generation of quantum hardware. In this paper, we propose a training algorithm that does not rely on gradient information. Specifically, we introduce a novel meta-optimization algorithm that trains a \emph{meta-optimizer} network to output parameters for the quantum circuit such that the objective function is minimized. We empirically and theoretically show that we achieve a better quality minima in fewer circuit evaluations than existing gradient based algorithms on different datasets.\\
%The source code, data and results will be available upon acceptance of the paper.
\end{abstract}

\section{Introduction}
\label{sec:introduction}
Machine learning has evolved over time from solving small size pattern recognition problems to being able to capture latent structure in data and generalize to unseen data at scale. The state of machine learning is quickly approaching a point where classical computers would not be able to keep up with the ever increasing scale and complexity of data. On the other hand, quantum computing holds a theoretical promise of being able to scale beyond existing classical approaches. While the current state of quantum machines is still far from being practically competitive to the classical counterparts, the emergent field of Quantum Machine Learning (QML) is an important step towards the future in which quantum computers form the basis of many challenging computational tasks \cite{herman2022survey,ajagekar2019quantum}.

The current bridge between classical and quantum algorithms is a class of algorithms called Variational Quantum Algorithms (VQA)s~\cite{cerezo2021variational}. These algorithms are concerned with finding the optimal set of parameters for a Variational Quantum Circuit (VQC) to minimize an objective function. The parameters are optimized using gradient descent which runs on a classical computer. Since the training procedure is very similar to modern Deep Neural Networks (DNNs), quantum variants of several existing DNN architectures have been proposed~\cite{farhi2018classification, henderson2020quanvolutional, romero2017quantum, cong2019quantum}. The classical and quantum neural networks differ in one key aspect: the  data has to be encoded as quantum states before it can be processed by a QNN. This encoding can be seen as a mapping to a high dimensional Hilbert space where the data is separable. It is thus entirely possible that with a powerful quantum computer, we will be able to capture a much richer representation of data and consequently outperform DNNs which is the motivation and hope behind several QML algorithms. The VQCs are deployed on quantum devices that support operations on quantum states. The current generation of devices, called Noisy Intermediate Scale Quantum (NISQ) devices, is quite limited due to a high presence of gate noise, inability to scale to beyond a few hundred qubits, and lack of  reliable error mitigation and detection algorithms~\cite{cerezo2021variational}. These bottlenecks directly affect the performance of QNNs and hence motivate  better algorithms to train them.

% The ongoing development of NISQ computers has led to an extensive research in algorithms that can utilize them and get closer to demonstrating a practical quantum advantage over classical methods. 
% Hybrid quantum-classical algorithms are considered to be one of the most prominent ways of achieving it in the observable future. 
% The classical computer is used to find the best parameters for a quantum circuit that will subsequently be executed on a quantum device. Then, the result of execution on a quantum device is returned back to the classical machine to improve the parametrization and the same loop is repeated \cite{liu2022layer,cerezo2021variational}.

% Given the diverse applications of VQAs despite their early stage of development gives hope that they will eventually be able to demonstrate a quantum advantage over classical algorithms by either finding a more optimal solution or converging to a solution in a polynomial time for a problem size that is intractable by classical methods. However, the current generation of NISQ devices have several limitations~\cite{cerezo2021variational} including but not limited to having shallow depth circuit and a small number of qubits to encode a given problem. Moreover, a quantum error detection and correction mechanism is still being actively researched and has not reached the necessary maturity to reliably detect and correct errors in quantum circuits.   
% \added[id=is]{We are in the end of second paragraph and there is nothing yet about ML; need inject something}

\emph{\textbf{The need for gradient-free optimization algorithms:}}  The bottlenecks in current generation of quantum devices are not the only reasons that motivate a better optimization algorithm. Other factors also make the case for a better optimization algorithm more compelling. For instance, for a QNN running on a quantum device, the gradients are computed using the parameter shift rule~\cite{schuld2019evaluating}. This method scales as $O(N)$ where $N$ is the number of parameters of the QNN. A full forward-backward pass then scales as $O(N^2)$. Clearly, if the QNN is to be scaled to large problem instances (e.g., Imagenet~\cite{deng2009imagenet} classification) the quadratic cost of computation must be improved. 

Another contributing reason is that in the current training regime, the QNN is treated as a black-box from the perspective of the optimizer. While conventional optimization algorithms like RMSProp~\cite{tieleman2012lecture}, Adam~\cite{kingma2014adam}, SGD etc.  still work for QNNs, it is possible that when QNNs are scaled to large problem instances, hand-designed optimization rules may not be able to fully capture the complexities of the probabilistic nature of QNNs. These reasons, coupled with inherent issues in NISQ devices  motivate the development of a efficient learned optimizers. More crucially, in order to be broadly applicable we demand that the optimizer makes as few calls to the quantum circuit as possible and estimate the direction of optimization in a gradient free manner. 

% TODO: revise language of the meta-optimizer.
One way of designing a new learning rule is to \emph{learn} it during training for a fixed task and dataset. If $\bm{\theta}^t \in \mathbb{R}^{N}$ are the parameters of QNN at timestep $t$ and $C(\bm{\theta})$ is a cost function we are interested in minimizing, then a learned update rule is of the form $\bm{\theta}^{t+1} = \mathcal{R}_{\bm{\Phi}}(\bm{\theta}^t, \nabla_{\theta} C(\bm{\theta}^{t}))$ where $\nabla_{\theta} C(\bm{\theta}^{t})$ is the gradient of QNN cost w.r.t $\bm{\theta}^{t}$. Here, $\mathcal{R}_{\bm{\Phi}}$ is a DNN parameterized by $\bm{\Phi} \in \mathbb{R}^{M}$ where $M$ is the number of parameters in the DNN. $\mathcal{R}_{\bm{\Phi}}$ is trained using a meta-loss function $\mathcal{L}(\bm{\Phi})$. For any given dataset and QNN, we are interested in finding an optimal set of meta-parameters $\bm{\Phi}^*$ by minimizing $\mathcal{L}(\bm{\Phi})$ such that the QNN cost $C(\bm{\theta})$ is minimized. For the rest of the paper, we refer to $\mathcal{R}_{\bm{\Phi}}$ as a \emph{meta-optimizer} and the problem of minimizing $\mathcal{L}(\bm{\Phi})$ as 
meta-optimization.

% One way of designing a new learning rule is to \emph{learn} it from the data. More formally, let $\bm{\theta}$ be a set of $n$ parameters, and $C(\bm{\theta})$ be an objective function we are minimizing. If at time step $t$, the parameters are $\bm{\theta}^{t}$ and the corresponding value of objective function is $C(\bm{\theta}^{t})$  then we define a learned update rule to be 
% \begin{equation}
% \bm{\theta}^{t+1} = \mathcal{L}(\bm{\theta}^{t}, f(C(\bm{\theta}^{t}), \bm{\Phi})), 
% \end{equation}
% such that $C(\bm{\theta}^{t+1})<C(\bm{\theta}^{t})$. $\mathcal{L}: \mathbb{R}^{n} \times \mathbb{R} \mapsto \mathbb{R}^n$ is a non-linear function that accepts $(\bm{\theta}^{t}, f(C(\bm{\theta}^{t}), \bm{\Phi}))$ as inputs and $f$ is a map from input data to output labels and adjusts its own parameters $\bm{\Phi}$ to minimize a \emph{meta-loss} function. We refer to the problem of finding optimal meta-parameters $\bm{\Phi}^{*}$ as \emph{meta-optimization} and $\mathcal{L}$ as a \emph{meta-optimizer}. 

Meta-optimization approaches are well studied in the deep learning literature. Andrychowicz~\emph{et al.}~\cite{andrychowicz2016learning} introduced the idea of optimizing a deep neural network using an LSTM based meta-optimizer that accepted $(\bm{\theta}^{t}, \nabla_{\theta} C(\bm{\theta}^{t}))$ as inputs. Li and Malik~\cite{li2016learning, li2017learning}  explore the idea from the reinforcement learning perspective where they leverage the LSTM based optimizer to learn a \emph{policy} for predicting the next set of parameters. Metz~\emph{et al.}~\cite{metz2019understanding} propose an alternative algorithm for fast training of meta-optimizers that  generalizes better than the gradient descent training. 

These findings have been inherited for training variational quantum circuits (VQCs) in the works of Wilson~\emph{et al.}~\cite{wilson2021optimizing} and Verdon~\emph{et al.}~\cite{verdon2019learning}. In the former work, the authors build on the algorithm in~\cite{andrychowicz2016learning} for a VQC and leverage $\nabla_{\theta} C(\bm{\theta})$ as an input feature for the meta-optimizer.  In the latter work, the authors propose using the meta-optimization framework for learning initial parameters using a recurrent neural network and then initializing a VQC with these learned parameters. The VQC is then proposed to be fine-tuned by a regular optimizer until a desired accuracy level is reached. In all the aforementioned works, a common denominator  is the reliance over the gradient as input features to the meta-optimizer (except ~\cite{verdon2019learning} which uses  $C(\bm{\theta})$). Given that computing gradients for QNNs is not cheap, the challenge is to develop a meta optimizer that can learn a proper update rule \emph{without} any explicit gradient computation step. 
% \emph{The challenge in the quantum domain is to design a strategy that performs competitively to classical optimizers without involving an explicit quantum gradient computation step which is what we propose}.

\noindent{\textbf{Our Contribution}} 
We address this critical challenge and propose a novel meta-optimization algorithm that learns to train QNNs without relying on any gradient information. More specifically, we make the following contributions:

% \added[id=is]{can we say that we propose the first gradient-free strategy?} \added[id=jl]{I don't think so, since there are other black-box optimization algorithms such as COBYLA that are gradient-free.} In brief, we make the following contributions:

%\begin{itemize}
\noindent (1) We design a novel algorithm for training QNNs using a meta-optimizer in a way that does not involve any computation or approximation of $\nabla_{\theta} C(\bm{\theta})$ on a quantum device. We show that using better features and training schedule can result in a meta-optimizer which is competitive with gradient based optimizers that are currently used for training QNNs.
% \noindent (1) We introduce a novel fast meta-optimization algorithm for training VQCs that does not rely on any gradient computation on the quantum processor. We show that engineering better features and training updates can help in simulating gradient descent effectively.
    
\noindent (2) We theoretically and empirically verify that our algorithm provides a significant speedup in training time over conventional gradient based optimization algorithms. 
    
\noindent (3) We empirically demonstrate that our algorithm achieves a minima which is comparable to that attained by the conventional first-order methods. Moreover, we show that with right initialization our algorithm can outperform classical gradient based algorithms with the same initialization.

To the best of our knowledge, this is a first meta-optimization algorithm for training QNNs that does not rely on computing gradients. We advocate that our algorithm can serve as blueprint for gradient-free meta-optimization algorithms.
% To the best of our knowledge, In summary, we advocate that our algorithm can serve as a blueprint for gradient-free optimization algorithms for training VQCs in the future.
%\end{itemize}

% TODO: write breadcrumbs for readers.

% The rest of the paper is organized as follows. In Section~\ref{sec:}
% interpreted to be a non-linear combination of inputs that learns the gradient trajectory \emph{without} explicitly computing the gradient.
% We shall make the notion of $\mathcal{L}$ more concrete in the coming sections. F

% solve the optimization problem using a \emph{Variational Quantum Circuit}(VQC) which is deployed on a quantum device and consists of parameterized unitary gates (generally drawn from the Clifford group) arranged in a layer-wise manner. The parameters $\bm{\theta}$ are real-valued parameters. The VQC is expected to minimize a cost function $C(\bm{\theta})$ defined as:
%\section{Background}
%\label{sec:theory}
%In this section we introduce key ideas and the overall algorithm that form the core of our work.

%% better to color code arrows _to_ and _from_ LSTM for better understanding
\begin{figure*}[t]
    \centering
    \includegraphics[width=\textwidth]{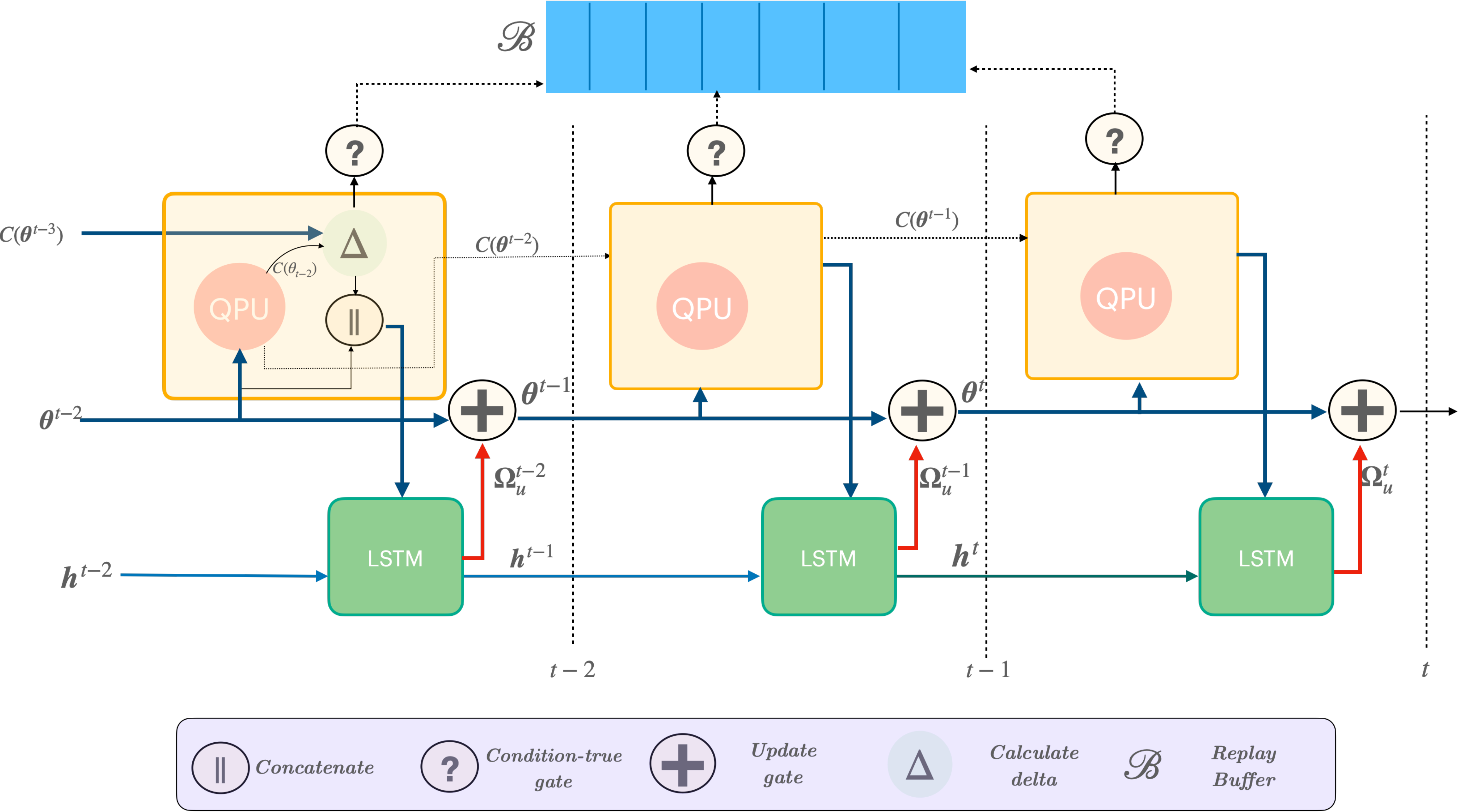}
    \caption{An overview of our proposed meta-learning algorithm. The blue arrows indicate key inputs to the QNN and the LSTM and the red
    arrows indicate the output from LSTM. Best viewed in color. See Section~\ref{sec:algorithm} for more details.}
    \label{fig:rnn_flow}
\end{figure*}
% \added[id=hm]{Need to explicitly say the difference between bold theta and the other. Should be the product of unitary \textbf{matrices}? In general, I would use Uppercase bold for Matrices and lowercase bold for vectors. So perhaps $\bm{U}_{\theta_i}$ and $\bm{U}_{\bm{\theta}}$? It is not clear if $U_e$ is a vector or matrix.} 
\section{Quantum Neural Networks}
For completeness, we first introduce the idea of Quantum Neural Networks (QNNs)~\cite{cerezo2019variational, farhi2018classification}.  Given a dataset consisting of $m$ samples $\mathcal{D} = \{\bm{x}_{i}, y_{i}\}_{i=1}^{m}$, where $\bm{x}_i \in \mathbb{R}^{d}$ are the data points with  corresponding labels $y_i$\footnote{The index subscript will be omitted where it is clear from the context.}. A quantum neural network $\fQ: \mathbb{R}^{d} \mapsto \mathbb{R}$ is a variational quantum circuit that learns to assign $y$ to $\bm{x}\in \mathbb{R}^d$. A parameterized quantum circuit implementing a quantum neural network can be written as a product of $n$  parametrized unitary matrices $\bm{U}_i(\theta_i)$, $\bm{U}(\bm{\theta}) = \prod_{i=1}^{n} \bm{U}_{i}(\theta_{i})$. Before the data can be processed by the VQC, an encoding circuit is applied to prepare the input state $\ket{\psi} = \bm{U}_0(\bm{x})\ket{0}$, where $\bm{U}_0(\bm{x}) = e^{-i\bm{x}\bm{G}}$ is a unitary  that encodes each component of a single $d$ dimensional data vector into a quantum state consisting of $q = \log(d)$ qubits. Here, $\bm{G}$ is a gate generating Hermitian matrix.  Overall we can describe the QNN $\fQ(\bm{x}, \bm{\theta})$ as:
\begin{equation}
    \fQ(\bm{x}, \bm{\theta}) = \bra{0} \bm{U}^{\dagger}_{0}(\bm{x})\Uad \bm{\hat{O}} \U \bm{U}_{0}(\bm{x})\ket{0},
    \label{eq:qnn}
\end{equation}
where $\bm{\hat{O}}$ is a quantum observable that maps the output quantum state to a scalar number and $\bm{U}^{\dagger}$ are the complex conjugate of the unitary matrix $\bm{U}$. We select the minimum squared error as the cost function of our choice. Averaged over all $m$ points of the dataset, we express it as:
\begin{equation}
    C(\bm{\theta}) = \frac{1}{m} \sum_{i=1}^{m} ||y_{i} - \fQ(\bm{x}_{i}, \bm{\theta})||^{2}_{2}
    \label{eq:costfn}
\end{equation}
% TODO: Change the argmin properly
The objective is to find $\bm{\theta}^{*} = \operatorname{argmin}_{\theta} C(\bm{\theta})$. In this work we shall assume that the QNN is minimizing the cost function, $C(\bm{\theta)}$, in Equation~(\ref{eq:costfn}).

\section{Meta-Optimization Framework}
We now discuss the meta-optimization framework in more detail. For simplicity, we adopt a similar nomenclature as in~\cite{andrychowicz2016learning} and refer to the QNN as the \emph{optimizee} network and the meta-optimizer as the \emph{optimizer} network.

In the meta-optimization framework, the task of the optimizee network is to evaluate the parameters suggested by the optimizer network. In turn, the optimization network accepts some input meta-features and performs an internal computation to generate new parameters for the optimizee network while adjusting its own parameters. In particular, if $\bm{\theta}^{t}$ are the optimizee parameters at time-step $t$ during training, then the next parameters are discovered following~\cite{andrychowicz2016learning}:
\begin{equation}
 \begin{aligned}[t]
    \bm{\theta}^{t+1} &= \bm{\theta}^{t} + \bm{g}^t\,,
    \\
    \begin{bmatrix}
     \bm{g}^{t} \\ \bm{h}^{t+1} 
    \end{bmatrix}
    &= \mathcal{R}_{\bm{\Phi}}(\mathcal{I}(\bm{\theta}^{t}), \bm{h}^{t})\,
 \end{aligned}.
 \label{eq:optimizee-adjust}
\end{equation}
% check the claim about I(\theta) for the Li paper.
Here we specify  $\mathcal{R}_{\bm{\Phi}}$ to be an instance of LSTM that accepts a hidden state $\bm{h}^{t}$ and the meta-parameters $\bm{\Phi}$ and outputs the update $\bm{g}^{t}$. The LSTM accepts $\mathcal{I}(\bm{\theta}^{t})$ which we define to be a function that combines information from the training process and presents it as input features to the optimizer network. For instance, in~\cite{andrychowicz2016learning} and similar works~\cite{li2017learning, wilson2021optimizing}, $\mathcal{I}(\bm{\theta}) = \nabla_{\theta} C(\bm{\theta})$. The general form of meta-loss function tries to minimize the loss over a finite horizon window of size $T$. In this work, we follow a meta-loss function inspired by the deep learning literature:
\begin{equation}
    \mathcal{L}(\bm{\Phi}) = \sum_{i=1}^{T} w_t C(\bm{\theta}^t),
    \label{eq:meta-optimizer}
\end{equation}
where $\bm{W}^{m} = \left[w_{1}, w_{2}, \dots, w_{T} \right]$ are the weights over the cost function evaluations at time step $t$. A uniform weighting strategy sets all weights to be equal. The \emph{magnitude} of the weights is a matter of choice and the decision is made based on the dataset that is being used. The optimization of $\bm{w}$ in a data driven manner is out of scope of this paper. In general, we suggest that a line of work based on hyper-parameter optimization~\cite{bergstra2011algorithms} can be developed and used to dynamically adjust the weights for any given data.

\section{Our Algorithm}
\label{sec:algorithm}
We will now discuss the main components of our algorithm. Although, we generally follow the meta-optimization framework discussed above, the constraints induced by the need for gradient free optimization lead us to introduce the algorithmic tools that have not been proposed in literature in the meta-optimization context for quantum machine learning.

% maybe add the sign that's part of the input?
\textbf{Input Preprocessing}: In earlier works, the input function involved passing some gradient information from the optimizee network to the optimizer network. Since we are constrained to not use gradient information, we utilize the tuple $(\bm{\theta}^{t}, \Delta C(\bm{\theta})$) as inputs where $\Delta C(\bm{\theta})$ is a \emph{pseudo-gradient} that computes the difference between the current and previous cost function evaluation, i.e., $\Delta C(\bm{\theta}) = C(\bm{\theta}^{t-1}) - C(\bm{\theta}^{t})$. These inputs are concatenated in a single vector and the input function $\mathcal{I}(\bm{\theta})$ is prepared as:
\begin{equation}
    \mathcal{I}(\bm{\theta})  = \frac{\mathcal{P}([\bm{\theta}^t; \Delta C(\bm{\theta}^t)])}{p}.
    \label{eq:input-function}
\end{equation}
Here we use a non-linear function $\mathcal{P}: \mathbb{R}^{d} \mapsto [0, 1]$ that normalizes the input values to between 0 and 1. The value $p$ controls the strength of normalization. In our experiments we use $\mathcal{P}(\bm{x}) = e^{\bm{x}}$ with $\bm{x}$ being the input vector for the LSTM. We empirically found $p=50$ to work best for the datasets considered in this study. In a future work, we would like to explore the possibility of learning an adaptive normalization strength that is derived from observing history of updates during training.

\textbf{Non Linear Parameter Updates}: In Equation~(\ref{eq:optimizee-adjust}), we defined a generic update rule in the meta-optimization framework. That update rule simply adds the output of LSTM to previously obtained parameters. In this work, we introduce a new update rule of the form:
\begin{equation}
    \bm{\theta}^{t+1} = \bm{\theta}^{t} + \alpha \cdot \sigma(\bm{\Omega}^{t}_{u}),
    \label{eq:param-update-rule}
\end{equation}
where $\bm{\Omega}^{t}_{u}$ are the updated parameters obtained from the LSTM. Compared to Equation~(\ref{eq:optimizee-adjust}), we have changed $\bm{g}^t = \bm{\Omega}^{t}_{u}$ to $\bm{g}^{t} = \alpha \cdot  \sigma(\bm{\Omega}^{t}_{u})$ where $\alpha$ is a hyper parameter and $\sigma$ is a non-linear activation function (tanh in our implementation). This update rule applies a non-linear activation to the LSTM parameters and controls the strength of the update using $\alpha$. Informally, $\alpha$ can be interpreted as a learning rate which helps the quantum learner in adjusting the new parameters. Since there is no direct gradient information, a destructive update (i.e., $\bm{\theta}^{t} = \bm{\Omega}^{t}_u$) would make the optimizee's parameters prone to oscillation due to the noisy output from LSTM. Additionally, the raw parameter updates from LSTM are unbounded in  $\mathbb{R}$. %(-\infty,+\infty)$. % $(\mathbb{R}^+, \mathbb{R}^-)$. optim
A clipping function (e.g., tanh) that clips the values between certain maximum and minimum values, leads to more stable updates which consequently yield a smoother parameter update.

\textbf{Replay Buffer}: In earlier works, the LSTM network was able to compute a good descent direction based on gradient information provided to it during training. Even in such works as~\cite{verdon2019learning}, the LSTM network was only used to learn the initial parameters for optimizing a QNN and then the conventional gradient descent method was used. \emph{However, we consider a harder case where no gradient information is available}. It is already known that a short horizon bias problem exists for learned optimizers for classical neural networks~\cite{wu2018understanding}. In the case of optimizing quantum networks without using any gradient information, this problem becomes even harder to overcome. 

In meta-optimization a short horizon bias occurs when the meta-optimizer becomes biased to providing updates akin to taking short steps towards minima. These updates do not cause a significant overall decrease in optimizee's cost function. This effect is more pronounced when we operate in parameter space as opposed to gradient space since the LSTM network does not get any information about the curvature of the optimization surface.  Running optimization in a finite horizon window in parameter space can then cause LSTM to suggest incorrect updates to the optimizee network. In the specific case of QNNs, the parameters correspond to a rotation of given input state about a particular axis. An incorrect update can very easily cause the quantum state to be rotated incorrectly and therefore lead to an increase in the value of the objective function we're interested in minimizing. To overcome this issue, we develop a technique inspired by work in reinforcement learning~\cite{mnih2013playing}. Throughout training, we keep track of past history of parameters and their corresponding cost function values in a ``replay buffer".  At the start of training we instantiate a double ended queue dubbed as a \emph{replay-buffer} $\mathcal{B}$ of a finite capacity $R$. For meta-iteration $t=1 \dots T$ we observe a history of parameters $\bm{\theta}^t$ and the corresponding cost $C(\bm{\theta}^t)$. If $C(\bm{\theta}^{t+1}) < C(\bm{\theta}^t)$ then we add the state $s = [\bm{\theta}^{t+1}, C(\bm{\theta}^{t+1}), \Delta C(\bm{\theta}), \bm{h}^{t+1}]$ to the replay buffer. Once the meta-iteration ends and $\mathcal{L}(\bm{\Phi})$ is computed, if the QNN cost function is diverging,  we seed the parameters for the next meta-iteration by performing the following update:
\begin{equation}
    \begin{aligned}
       \bm{\theta}^{T+1} &= \tau \cdot \bm{\theta}^{T} + (1 - \tau)\cdot \bm{\theta}_{s}\\
       \tau &= \frac{\tau}{(1+ \zeta \cdot t)}.
    \end{aligned},
    \label{eq:replay_buffer_update}
\end{equation}
% \added[id=hm]{it is not clear what the subscript notation means. Can we do without it?}\added[id=ak]{The returned state from replay buffer consists of multiple elements, we need to be precise about what exactly we're using. See Algorithm~\ref{alg:rnn_opt}}
where $\zeta$ is a decay factor that adjusts the blending coefficient $\tau$ as the training progresses and $\bm{\theta}^{T}$ are the optimizee parameters at the end of a previous unrolled meta-iteration loop. $\bm{\theta}_{s}$ are the sampled parameters from the replay buffer that are chosen according to a fixed policy function $\pi(\mathcal{B})$. In our work this policy corresponds to choosing the parameters that lead to most cost decrease over the previous unroll iterations. We expect that future works in this directions will be able to learn $\pi(\mathcal{B})$ along with parameters.  We set $\tau$ to be $0.9$ to put more emphasis on the current parameters than sampled parameters. As training progresses and we observe divergent behavior, we decrease $\tau$ according to Equation~\ref{eq:replay_buffer_update} with $\zeta = 0.99$ and $t$ indicating the overall global step of optimization iteration.

% As training progresses we decrease $\tau$ to place a stronger emphasis on sampled parameters since $\mathcal{B}$ consists of the most recent history of best performing parameters at previous time steps. We empirically found $\zeta=0.9$ to provide good convergence performance on the datasets considered in our study.  
% The parameters in $s_{d}$ are then passed in-place of the last time step's parameters to the LSTM. We also consider an alternate strategy where the parameters from the last time step are blended with the parameters in $s_{d}$ using a weighted combination and the weights are updated using an annealing rule.

% \input{sections/alg_typeset}

%% Important: Either provide the algorithm or remove the second sentence.
\textbf{Overall Algorithm}: The overall flow of the algorithm is shown in Figure~\ref{fig:rnn_flow}. At a given timestep $t$, we expect the cost function value $C(\bm{\theta}^{t-2})$ and the parameters $\bm{\theta}^{t-1}$. We then compute the cost $C(\bm{\theta}^{t-1})$ using Equation~(\ref{eq:costfn}). A cost delta is computed $\Delta C(\bm{\theta}) = C(\bm{\theta}^{t-2}) - C(\bm{\theta}^{t-1})$, which is then used as a decision metric and a feature in the input. In the former role, if $\Delta C(\bm{\theta}) < 0$ then a sample (described earlier) is committed to the replay buffer $\mathcal{B}$. Then, depending on the availability of elements in $\mathcal{B}$, the parameters are either sampled or retained from previous step. An input feature is then pre-processed using Equation~(\ref{eq:input-function}). The updated parameters $\bm{\Omega}^{t}_{u}$ are then obtained from the LSTM and $\bm{\theta}^{t}$ is computed using Equation~(\ref{eq:param-update-rule}). In the initial conditions, we set $\Delta C(\bm{\theta}) = 1.0$.

\section{Theoretical Performance}
We now analyze the theoretical performance gain that our algorithm can provide in the context of parameterized quantum circuits. We note that our algorithm is not predicated on an existence of an efficient data structure like QRAM~\cite{HHL}, rather the performance gains are expected due to non-computation of gradient on quantum devices.

\begin{theorem}
Consider a  variational quantum circuit consisting of $q$ qubits, $L$ layers and $k$ single qubit gates per layer. Let $\mathcal{A}$ be an optimization algorithm running on a classical device helping the circuit minimize a cost function $C(\theta)$. Then:
\begin{itemize}[noitemsep,topsep=-\parskip,leftmargin=*]
\item If $\mathcal{A}$ is a gradient-dependent algorithm the the total time for one full pass (forward and backward) takes $O(2(qLk)^2\delta t_{f})$, where $\delta t_{f}$ is the time for a forward pass.
    
\item If $\mathcal{A}$ does not require a gradient, then the total time for one full pass takes $O(2(qLk)\delta t_{f})$.
\end{itemize}
\label{theo:run-time}
\end{theorem}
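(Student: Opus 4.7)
The plan is to prove both bounds by a direct accounting of how many times the quantum device must be queried in each regime, paired with a careful analysis of the cost of a single query. The first step is to count the trainable parameters: since there are $L$ layers with $k$ parametrized single-qubit gates each, and these gates act across the $q$-qubit register, the parameter vector $\bm{\theta}$ has dimension $N = qLk$. A single forward evaluation of $\fQ(\bm{x}, \bm{\theta})$ in Equation~(\ref{eq:qnn}) requires preparing $\ket{\psi}$, sequentially applying these $qLk$ gates together with the fixed embedding, and measuring $\bm{\hat{O}}$; by hypothesis this takes time $\delta t_f$, and I would make explicit that $\delta t_f$ is linear in the gate count, writing $\delta t_f = \Theta(qLk)\,\delta t_g$ for an elementary gate time $\delta t_g$.

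For the gradient-dependent case, I would invoke the parameter-shift rule of~\cite{schuld2019evaluating}: for each coordinate $\theta_i$ the derivative $\partial C/\partial \theta_i$ equals a scaled difference of two circuit evaluations at shifted angles $\theta_i \pm \pi/2$. Computing the full gradient therefore requires $2N = 2qLk$ shifted evaluations on the quantum device. Each such evaluation is itself a complete forward pass on a circuit of the same depth, costing $\Theta(qLk)\,\delta t_g$. Combining the two factors, the backward sweep performs $2qLk \cdot qLk = 2(qLk)^2$ elementary gate operations, which together with the initial forward pass gives a total of $O(2(qLk)^2\,\delta t_f)$ once $\delta t_f$ is interpreted as the elementary-operation time unit used in the theorem statement.

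For the gradient-free case, one inspects the algorithm in Section~\ref{sec:algorithm}: at each meta-iteration the only task the quantum device performs is a single evaluation of $C(\bm{\theta}^{t})$ in order to both produce the cost and form the pseudo-gradient $\Delta C(\bm{\theta}) = C(\bm{\theta}^{t-1}) - C(\bm{\theta}^{t})$. The LSTM forward pass that outputs $\bm{\Omega}^t_u$, the normalization $\mathcal{P}$ from Equation~(\ref{eq:input-function}), the nonlinear update in Equation~(\ref{eq:param-update-rule}), and the replay-buffer bookkeeping from Equation~(\ref{eq:replay_buffer_update}) are all classical operations independent of the quantum hardware cost. A full pass therefore amounts to $O(1)$ quantum forward evaluations, each of cost $\Theta(qLk)\,\delta t_g$, for a total of $O(2(qLk)\,\delta t_f)$ under the same convention.

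The main subtlety, and the step I would treat most carefully, is fixing the meaning of $\delta t_f$ consistently across both parts of the statement: if read literally as the time of one full forward pass, the backward sweep contributes only $2N$ such passes and the quadratic factor must come from unrolling one factor of $qLk$ out of $\delta t_f$ itself. Making the gate time $\delta t_g$ explicit up front eliminates this ambiguity, after which the two bounds reduce to counting $2N+1$ versus $O(1)$ quantum queries and multiplying by the per-query cost $\Theta(qLk)\,\delta t_g$. The rest of the proof is routine bookkeeping.
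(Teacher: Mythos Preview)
Your argument is essentially the same as the paper's: both proofs count $N=qLk$ trainable parameters, invoke the two-evaluation parameter-shift rule of~\cite{schuld2019evaluating} to get $2qLk$ circuit calls for the gradient, and contrast this with the single forward evaluation needed in the gradient-free regime. You are in fact more explicit than the paper in flagging the unit ambiguity in $\delta t_f$ (the paper's proof simply asserts the jump from $O(2qLk)$ gradient evaluations to $O(2(qLk)^2\delta t_f)$ total time without spelling out that one factor of $qLk$ is being unrolled from the forward pass), but the underlying counting argument is identical.
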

\begin{proof}
Consider an instance of gradient dependent algorithm $\mathcal{A}^{g}$. To update the parameters for the next step, it requires $\nabla_{\theta} C(\theta)$. For a circuit running on NISQ computer, currently the only method to compute gradients is to use the parameter shift method~\cite{schuld2019evaluating}. The expected gradient w.r.t single component $\theta_i$ is given as:
\begin{equation}
    \nabla_{\theta_{i}} C(x; \theta_{i}) = \bra{\psi_{x}}\nabla_{\theta_{i}}\mathcal{F}_{\theta_{i}}(\bm{\hat{O}})\ket{\psi_{x}},
    \label{eq:gradient-def}
\end{equation}
where $\ket{\psi_{x}} = \bm{U}_{0}(x)\ket{0}$. The quantity $\nabla_{\theta_{i}}\mathcal{F}_{\theta_{i}}(\bm{\hat{O}})$ is the gradient w.r.t to the $i^{th}$ component of the parameter vector. The parameter shift rule states: 
\begin{equation}
    \begin{aligned}
       \nabla_{\theta_{i}}\mathcal{F}_{\theta_{i}}(\bm{\hat{O}}) &= c[\mathcal{F}_{\theta_{i} +s}(\bm{\hat{O}}) - \mathcal{F}_{\theta_i - s}(\hat{O}))] \\
       \mathcal{F}_{\theta_i}(\bm{\hat{O}}) &= \bm{U}^{\dagger}(\theta_{i}) \bm{\hat{O}} \bm{U}(\theta_{i}).
    \end{aligned},
    \label{eq:gradient-update}
\end{equation}
where $c$ is a scaling constant ($c = 0.5$) and s is a constant shift angle about which the gradient is computed. To successfully evaluate $\nabla_{\theta_{i}} C(\theta)$ one needs two evaluations of the quantum circuit with shifted parameters. Consequently, the time for one gradient computation over the entire quantum circuit is $O(2qlK)$. Then, for a full pass (i.e. computing the cost and gradient), the computation time for $\mathcal{A}^{g}$ scales as $O(2(qlK)^2\delta t_{f})$. In contrast, a gradient free method (like ours), does not incur the penalty of computing the gradient and thus the overall computation time scales as $O(2(qlk)\delta t_f)$. 
\end{proof}

The reduction of a quadratic run-time to a linear run-time is significant since this will allow a QNN to scale to a larger number of data points. Although, our method does incur a storage overhead of $O(R)$, for all practical scenarios $O(R) \ll  O(qlK)$.

% \begin{figure*}[t]
%     \centering
%     \begin{subfigure}[b]
%     \centering
%      \includegraphics[width=\textwidth]{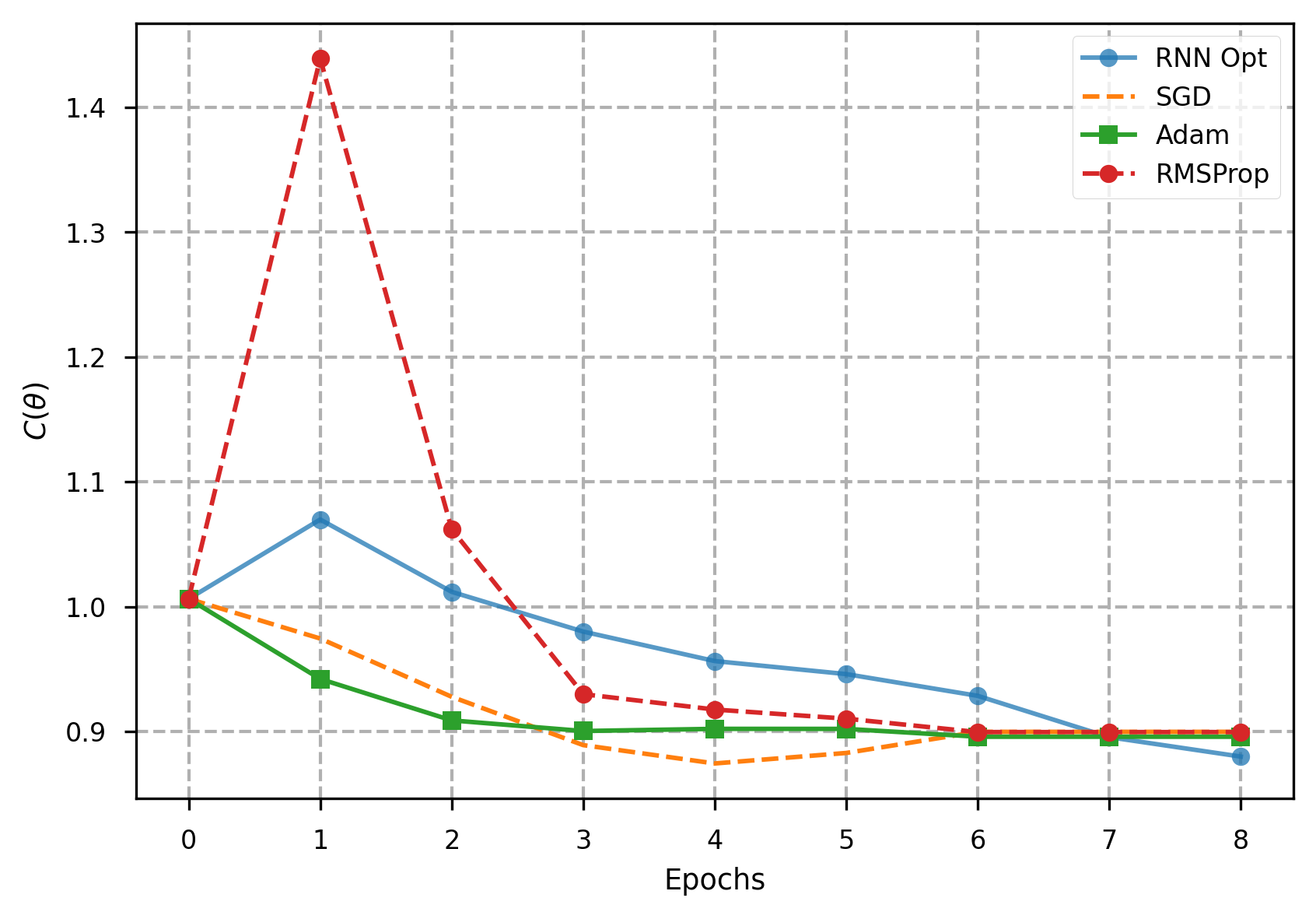}
%     \caption{Iris}
%     \end{subfigure}
%     \begin{subfigure}[b]{.3\textwidth}
%     \centering
%      \includegraphics[width=\textwidth]{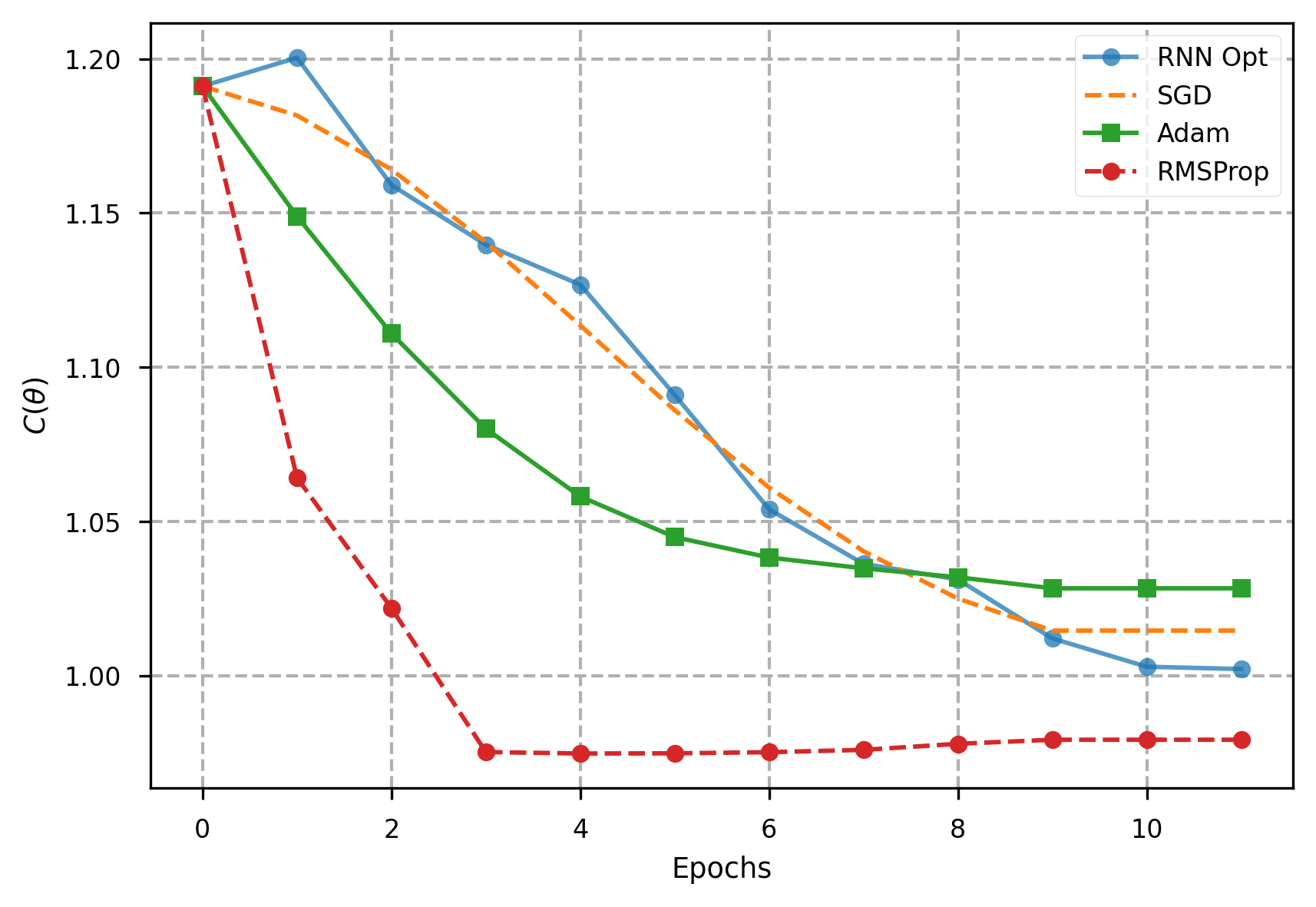}
%     \caption{Gaussian}
%     \end{subfigure}
% \end{figure*}

\begin{figure*}[t!]
    \centering
    \begin{subfigure}{.25\textwidth}
    \centering
     \includegraphics[width=\linewidth]{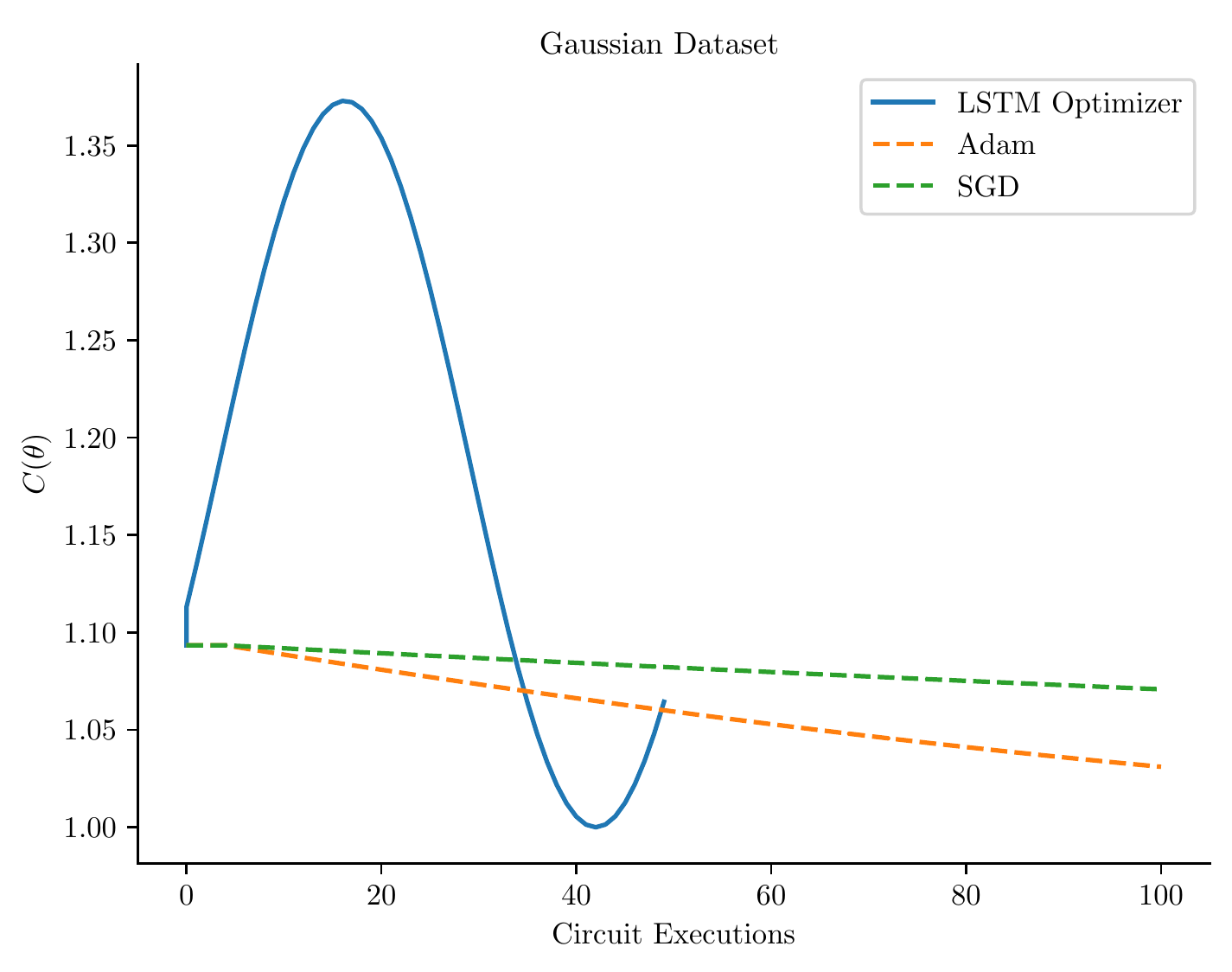}
     \caption{}
     \label{fig:gaussian_no_rb}
    \end{subfigure}\hfil
    \begin{subfigure}{.25\textwidth}
    \centering
     \includegraphics[width=\linewidth]{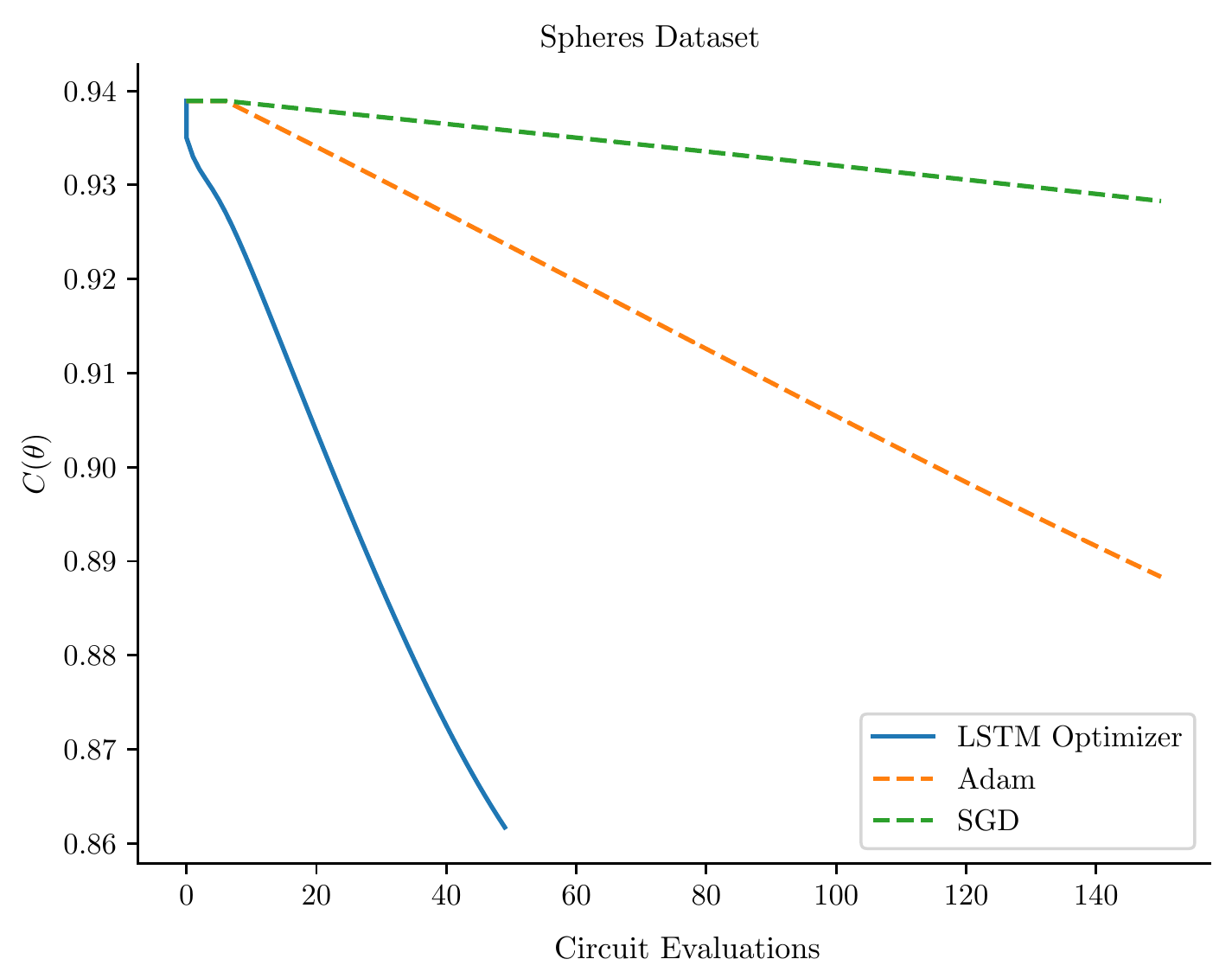}
     \caption{}
     \label{fig:spheres_no_rb}
    \end{subfigure}\hfil
    \begin{subfigure}{.3\textwidth}
    \centering
     \includegraphics[width=\linewidth]{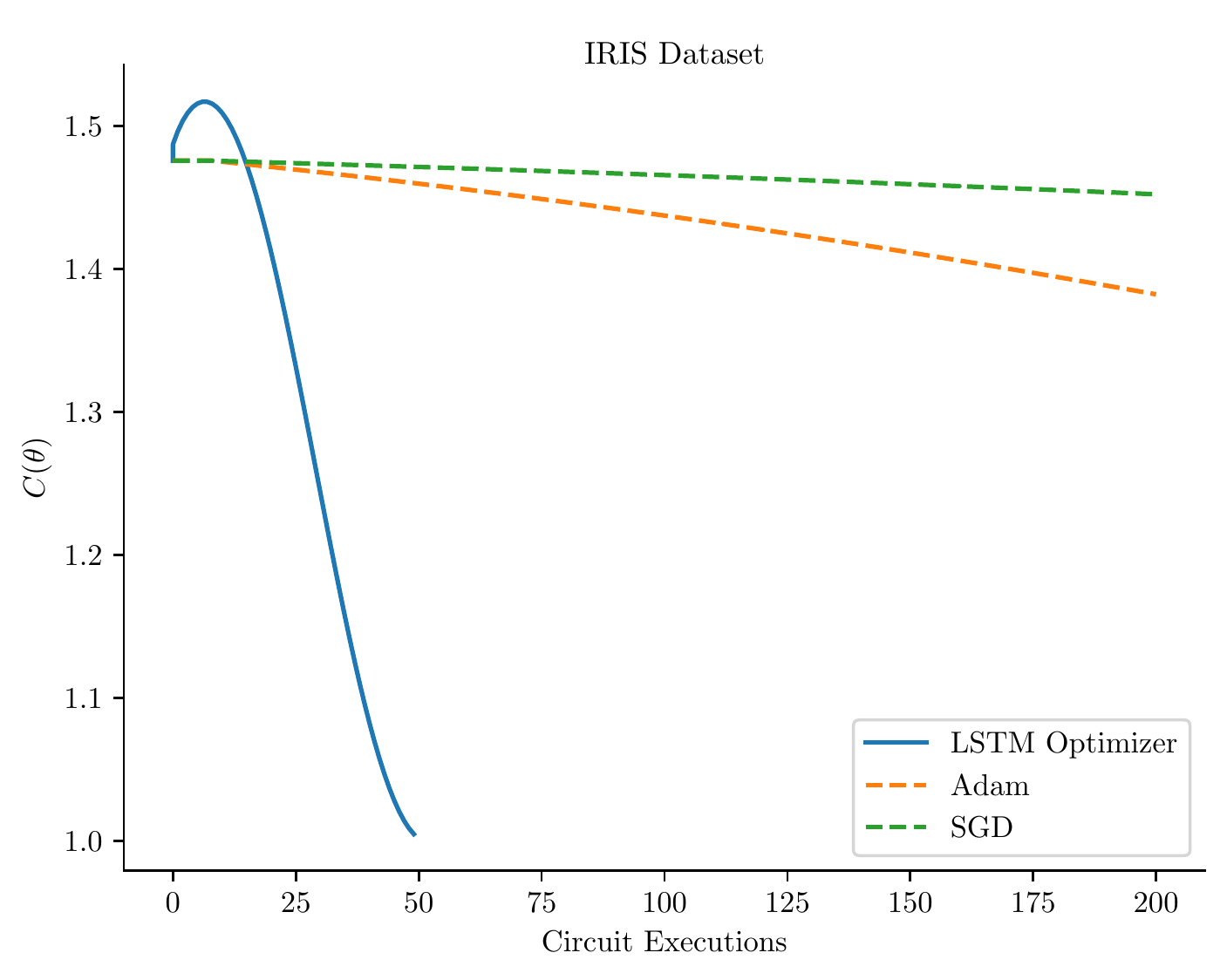}
     \caption{}
     \label{fig:iris_no_rb}
    \end{subfigure}

    \medskip
    \begin{subfigure}{0.25\textwidth}
        \includegraphics[width=\linewidth]{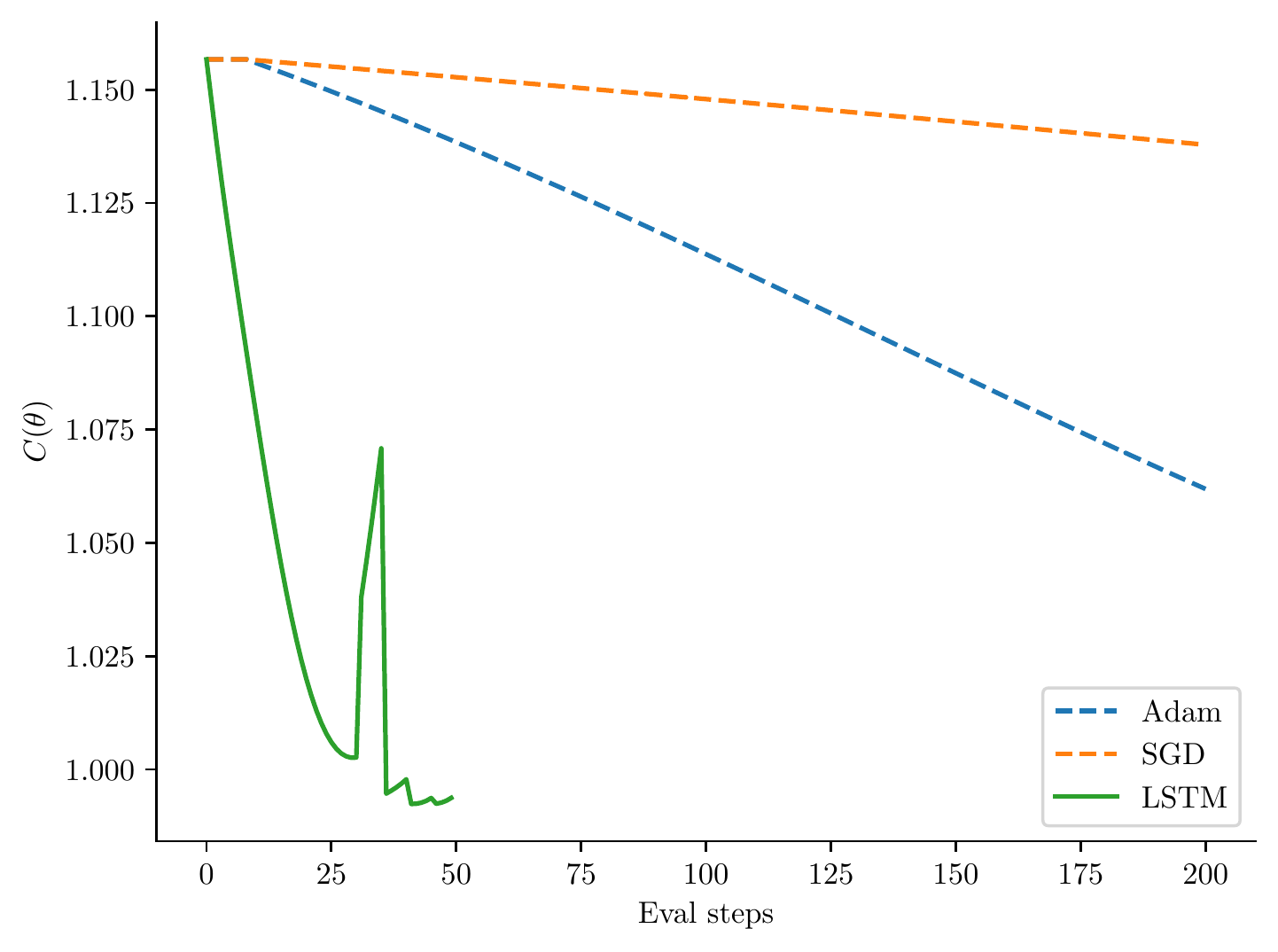}
        \caption{Gaussian Clusters}
        \label{fig:gaussian_rb}
    \end{subfigure}\hfil % <-- added
    \begin{subfigure}{0.25\textwidth}
        \includegraphics[width=\linewidth]{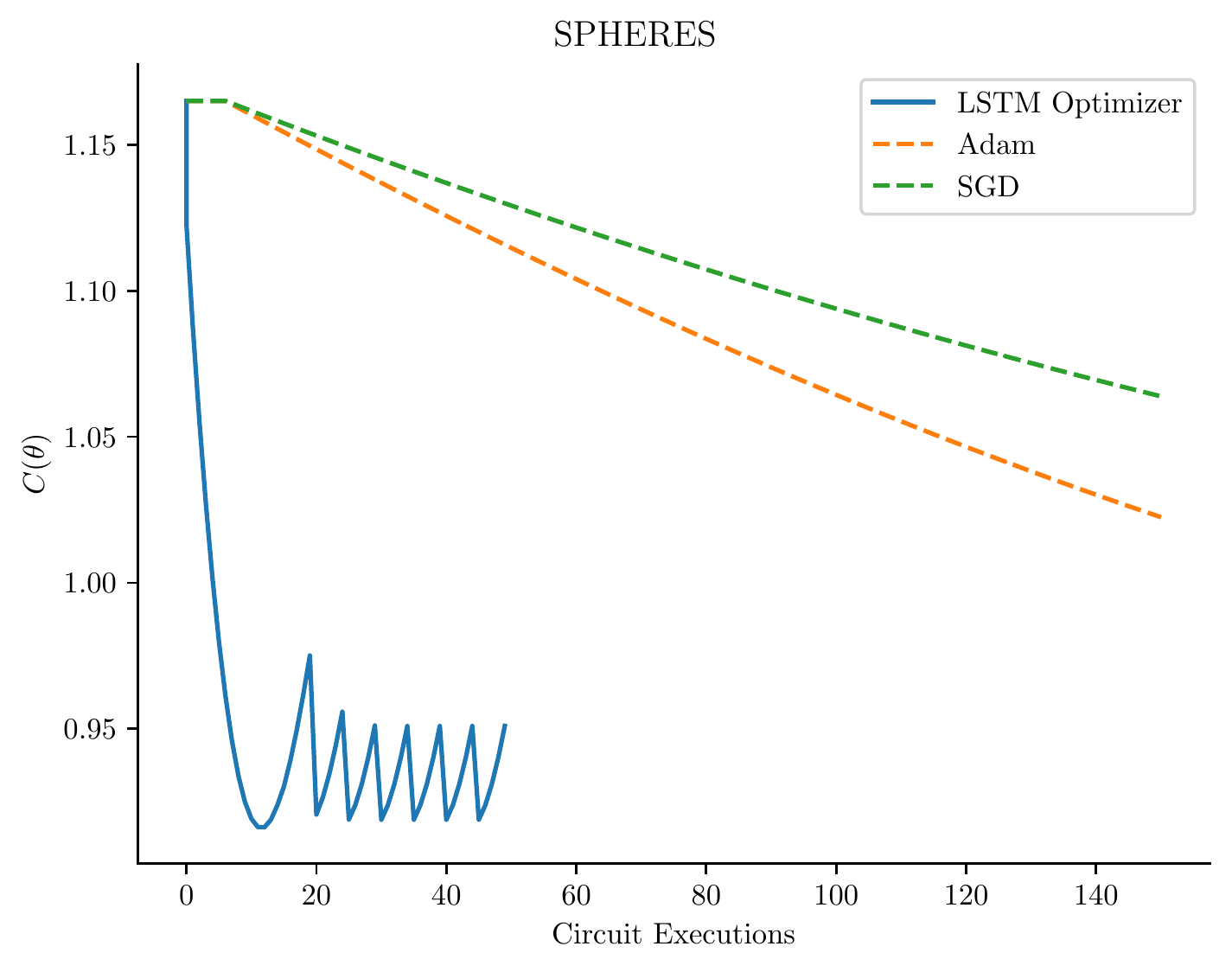}
        \caption{Spheres 3d}
        \label{fig:spheres_rb}
    \end{subfigure}\hfil % <-- added
    \begin{subfigure}{0.25\textwidth}
        \includegraphics[width=\linewidth]{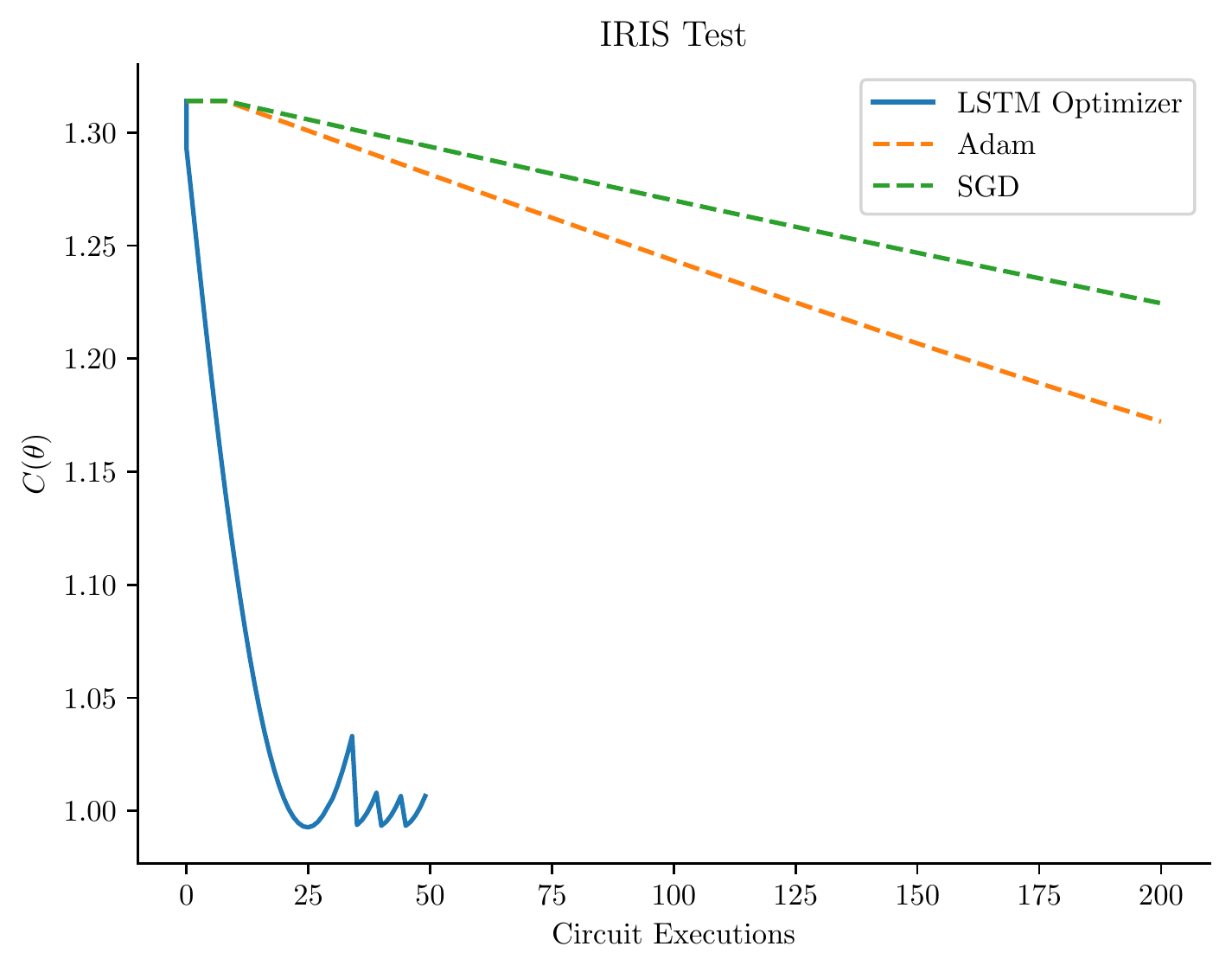}
        \caption{Iris}
        \label{fig:iris_rb}
    \end{subfigure}
    
    \caption{Performance of LSTM-based meta-optimizer on the binary classification task for three datasets. All parameters are initialized using the normal distribution. The top row shows the performance of the algorithms without any replay buffer sampling and the bottom row shows the performance with replay buffer sampling. In both cases, the LSTM based optimizer is able to achieve a lower cost in significantly fewer circuit evaluations.}
    \label{fig:perf_qckt_ceval}
\end{figure*}

\section{Numerical Experiments}\label{sec:expts}

In this section we show numerical experiments that demonstrate the effectiveness of our method. The goal of our experiments is twofold. First, we wish to empiricially demonstrate that using a LSTM based meta-optimizer can lead to better convergence with lower number of circuit executions than existing gradient based or gradient free methods. Second, we wish to show that our method is more useful than gradient based methods in realistic scenarios where the number of shots on a quantum device are limited. In this latter case we demonstrate that the LSTM based optimizer is able to evolve a significantly better optimization trajectory than a gradient based algorithm. %

\subsection{Experiments on Machine Learning Datasets}
\begin{figure}
    \centering
    \includegraphics[width=\columnwidth]{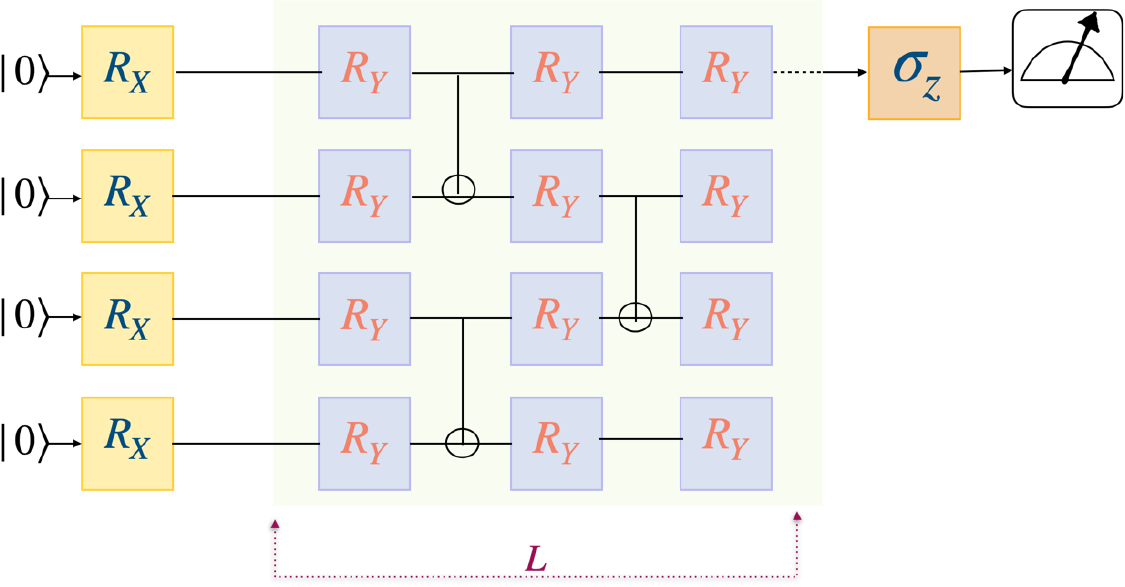}
    \caption{The quantum circuit used in our study. The $R_{Y}$ gates are parameterized by variational parameters. The $R_{X}$ gates
    applied to each qubit constitute $\bm{U}_0$ in Equation~\ref{eq:qnn}}.
    \label{fig:qckt_expt}
\end{figure}

We first present our experiments on a pattern classification task using an instance of a layered ansatz as shown in Figure~\ref{fig:qckt_expt}. The first layer in the ansatz embeds the input data $\bm{x}$ into the corresponding angles of a RX gate, where $RX(x_j)= e^{-ix_j \sigma_{x}/2}$ and $\sigma_{x}$ is the Pauli-X matrix. The subsequent layers apply a parameterized RY rotation and are entangled using the CZ gate. Since the number of input qubits is equal to the dimensionality of data, we consider three datasets with increasing dimensionality - Gaussian, Spheres and Iris. The Gaussian dataset is a synthetic dataset where two-dimensional clusters are instantiated by drawing samples from a multivariate Gaussian distribution of given parameters. Similarly, the spheres dataset is a collection of 3-d points which form concentric spheres with one sphere enveloping the other. Finally, we consider a truncated Iris dataset that consists of only two classes and four features that describe a particular species. In all these datasets, the labels are encoded into $\{-1, +1\}$ and the optimization task is to find variational parameters that minimize Equation~\ref{eq:costfn} with Pauli-Z gate being the observable. 

In the experiments we do not apply any pre-processing or post processing on the input data. We benchmark our LSTM optimizer against two commonly used gradient based algorithms - ADAM~\cite{kingma2014adam} and Gradient Descent. For the LSTM optimizer, the ansatz is run for 50 iterations while for gradient based algorithms we run it for 25 iterations. We then profile the cost obtained by the corresponding algorithms with the number of circuit evaluations. The learning rate for gradient based algorithms is set as $1e^{-2}$ and $\alpha = 0.1$ for the LSTM optimizer.

Figure~\ref{fig:perf_qckt_ceval} shows the results of our experiments on the three datasets. The top (bottom) rows show the performance without(with) replay buffer sampling. In all cases, the LSTM optimizer is able to find parameters that minimize the cost function in far fewer circuit evaluations than gradient based optimizers. This result is an empirical validation of Theorem~\ref{theo:run-time} and intuitively makes sense since parameter shift rules evaluate the same circuit twice per parameter component to estimate the gradient at a given timestep.

Figure~\ref{fig:gaussian_no_rb} shows a phenomenon unique to LSTM based optimizers. After finding a good descent direction, the LSTM based optimizers tend to over optimize and thus oscillate about some fixed point. This ``sinusoidal" osciallation is hard to control in optimization problems since it's not easy to guess the stationary points in multivariate objective functions. Our replay buffer sampling strategy is aimed to control this phenomenon in a more principled manner and it's effects are visible in Figure~\ref{fig:gaussian_rb} where the parameter mixing and decay lead to a more stabler convergence. The effects are also visible in Figures~\ref{fig:spheres_rb} and Figures~\ref{fig:iris_rb} where the cost function does not diverge after reaching a minimum point. 

\begin{figure*}[t]
    \centering
    \begin{subfigure}{0.25\textwidth}
        \includegraphics[width=\linewidth]{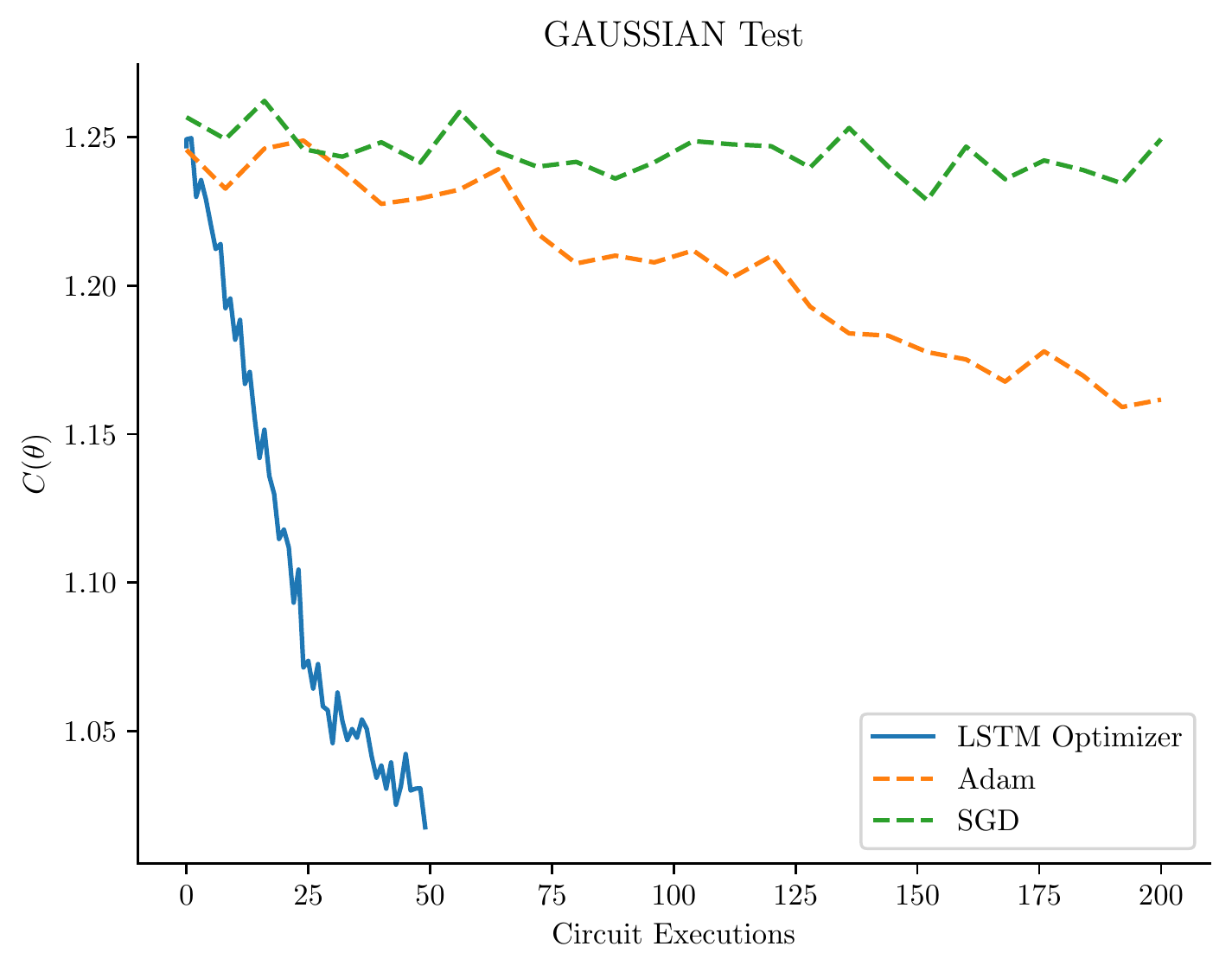}
        \caption{Gaussian Clusters}
        \label{fig:1}
    \end{subfigure}\hfil % <-- added
    \begin{subfigure}{0.25\textwidth}
        \includegraphics[width=\linewidth]{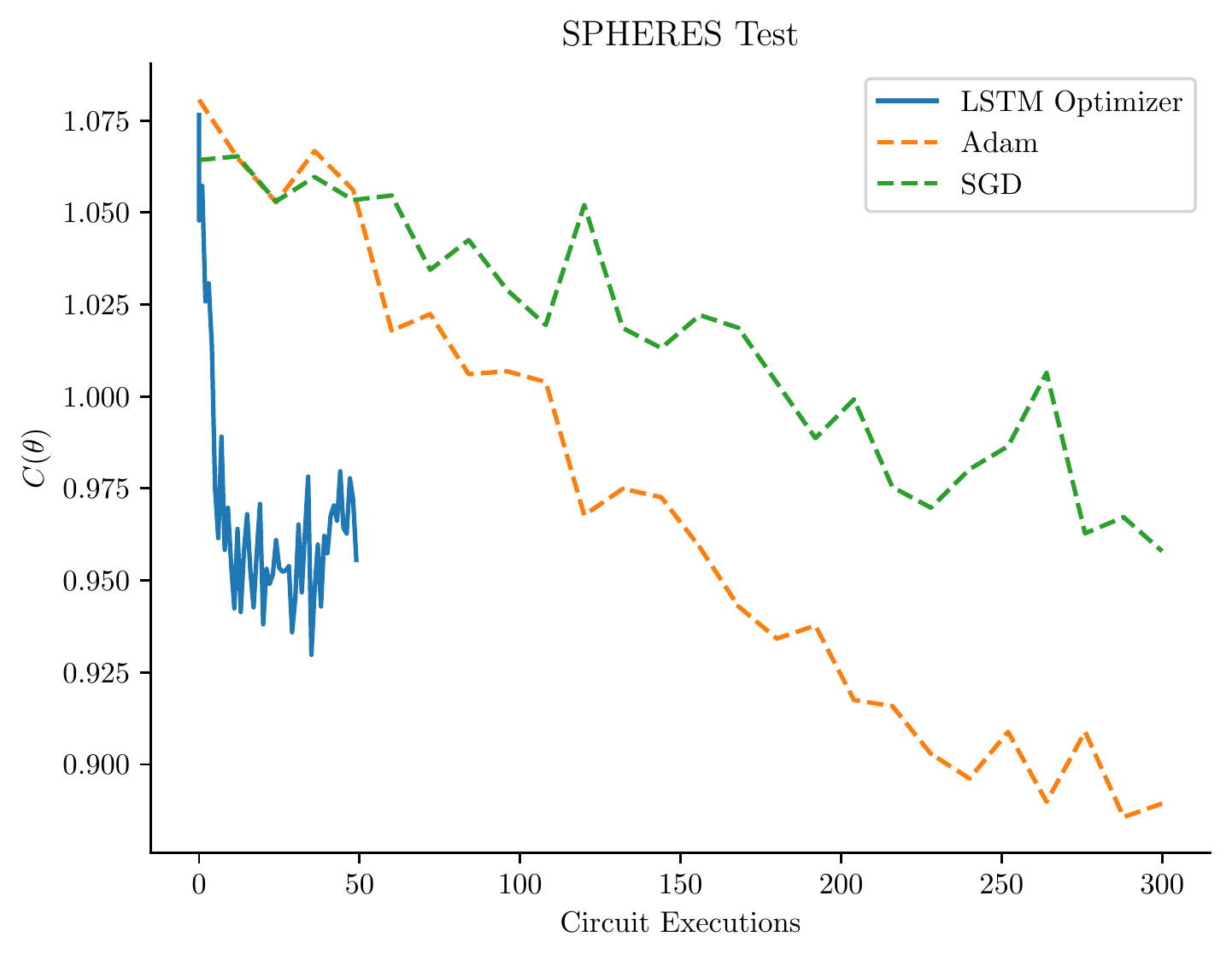}
        \caption{Spheres 3d}
        \label{fig:2}
    \end{subfigure}\hfil % <-- added
    \begin{subfigure}{0.25\textwidth}
        \includegraphics[width=\linewidth]{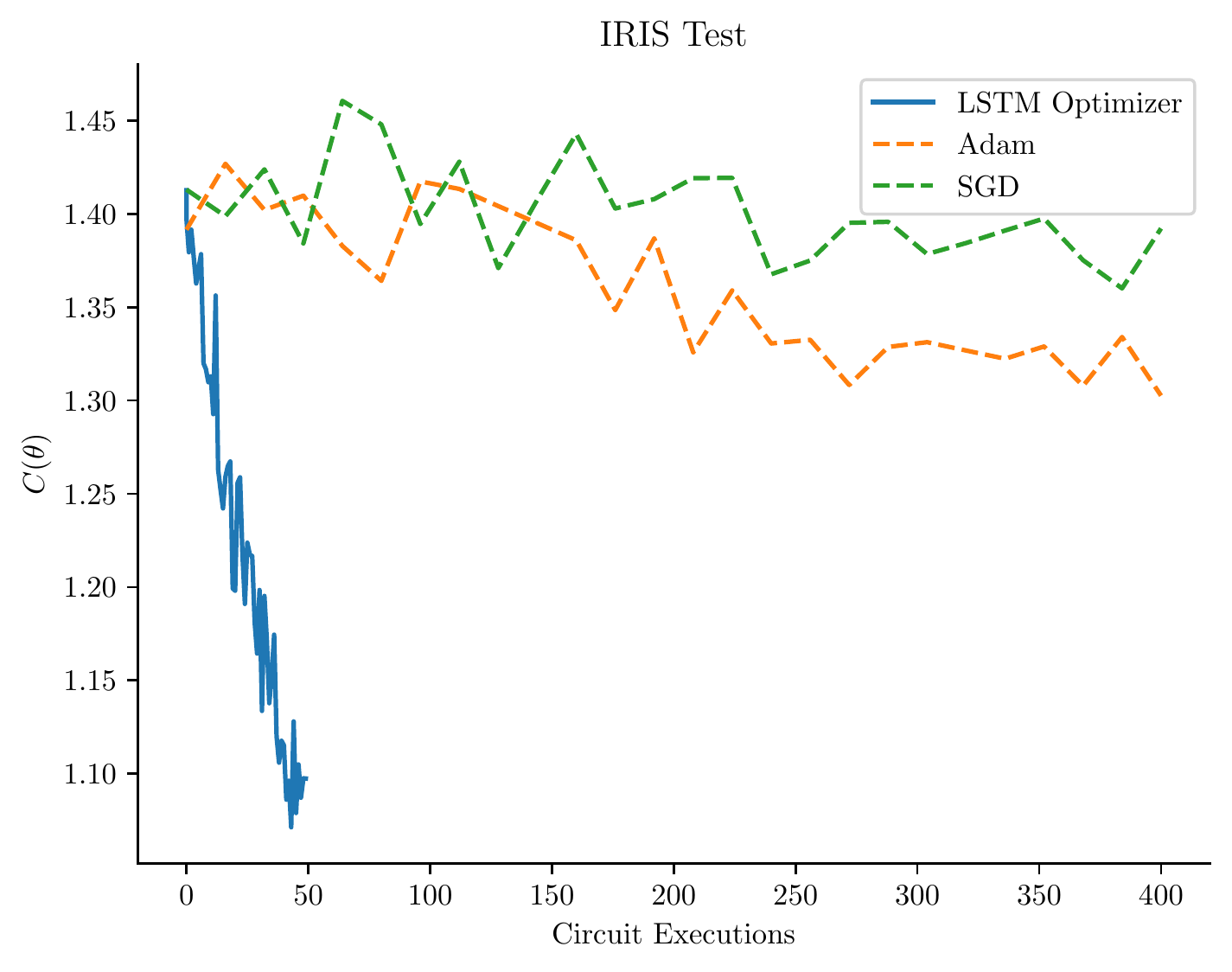}
        \caption{Iris Dataset}
        \label{fig:3}
    \end{subfigure}
    \caption{Results on experiments with limited number of shots. In the limited shot setting, the LSTM based optimizer is able to minimize the cost function more effectively than gradient based approaches.}
    \label{fig:qckt_eval_lim_shots}
\end{figure*}

%% performing experiments with 1000 shots as well
\subsection{Experiments with Limited Shots}
In the experiments on the machine learning datasets we simulated the performance of the optimization algorithms assuming that number of available shots on the device were infinite. This assumption is however not realistic since most NISQ devices have limited number of shots for measurement. Hence, we measure the performance of the LSTM optimizer in the setting where the number of shots is highly limited.   

%% TODO: Rewrite this.
Figure~\ref{fig:qckt_eval_lim_shots} shows the results of our experiments on a simulated quantum device with only 100 shots for measurement. We \emph{a-priori} expect the stochasticity in the sampling to manifest in the optimization process and affect the overall quality of minima for both gradient free and gradient based optimizers. In the former case, the noisy measurement of cost functions with given parameter values make it hard for the LSTM to suggesst ``good" updates and in the latter case the noisy measurements of cost functions are coupled with noisy estimation of gradient. The noisy estimation of gradient in this case is not equivalent to stochastic averaging over mini-batches in classical deep learning algorithms and hence does not guarantee a better convergence rate. In the figure, we see that the LSTM based optimizer is able to obtain a lower minima in atleast two datasets and is able to find a minima faster than gradient based algorithms in all datasets. This demonstrates that stochastic measurement of gradient can lead to slower and sub-optimal coverage. We further demonstrate the performance of the QNN on these datasets with 1000 shots in the appendix. To conclude, in a limited shot setting a gradient free algorithm like ours can be useful as a drop in replacement for a gradient based optimization algorithm.

\subsection{Comparison Against Other Gradient Free Approaches}
We now benchmark our algorithm against another well known gradient-free algorithm - Simultaneous Perturbation Stochastic Approximation (SPSA) algorithm~\cite{robbins1951stochastic} which approximates gradients by taking finite differences between perturbed parameter vectors. The perturbations are generated stochastically. 

We perform experiments with a four qubit, five layer variational circuit whose minimum parameters correspond to the minima of the following cost function:

\begin{equation}
    C(\bm{\theta}) = \bigotimes_{i=1}^{n} \bra{\psi(\bm{\theta})}\sigma_{z}(i)\ket{\psi(\bm{\theta})}
    \label{eq:cost_fn_spsa}
\end{equation}

\begin{figure}[h]
    \centering
    \includegraphics[width=\columnwidth]{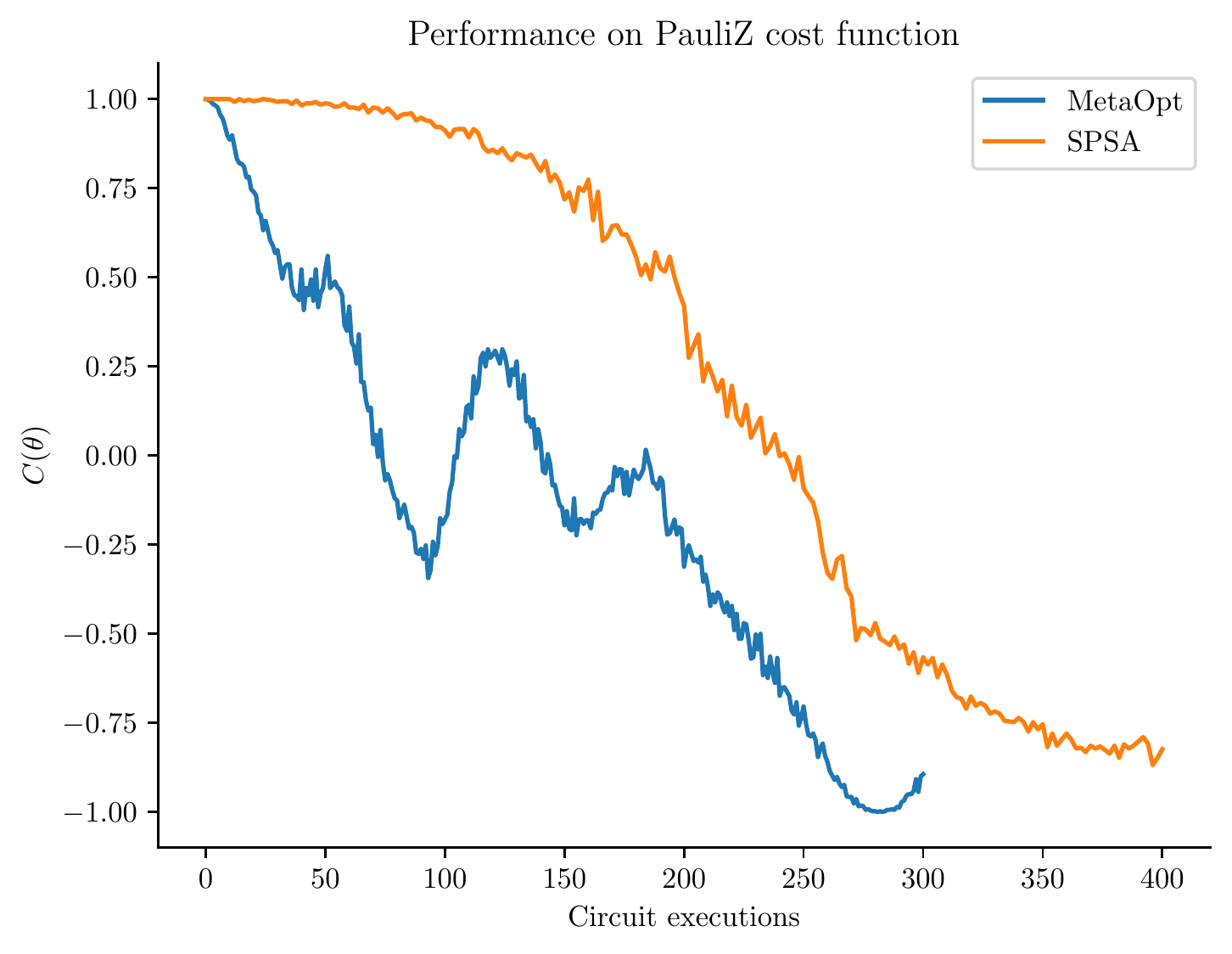}
    \caption{Performance of LSTM based optimizer benchmarked against the SPSA algorithm. For the same cost function, SPSA makes an extra call to the quantum circuit which results in more number of circuit evaluations. LSTM based optimizer outperforms SPSA in terms of cost function minimum.}
    \label{fig:spsa_vs_lstm}
\end{figure}

Where $\ket{\psi(\bm{\theta})}$ is a variational circuit consisting of strongly entangling layers i.e. rotation gates with all-to-all entanglement and $\sigma_{z}(i)$ is the Pauli-Z matrix acting on the $i^{th}$ qubit as the observable. The results of our experiments are shown in Figure~\ref{fig:spsa_vs_lstm}. We can see that the LSTM Optimizer is able to find a better quality minima than SPSA as well. Moreover, SPSA makes an extra call to the circuit per optimization step making it slightly more expensive than LSTM optimizer. We conclude that the our gradient free algorithm can help VQCs scale to larger and more complex problem instances in the future.

\subsection{Time Profiling Results}

\begin{figure}[h]
    \centering
    \includegraphics[width=\columnwidth]{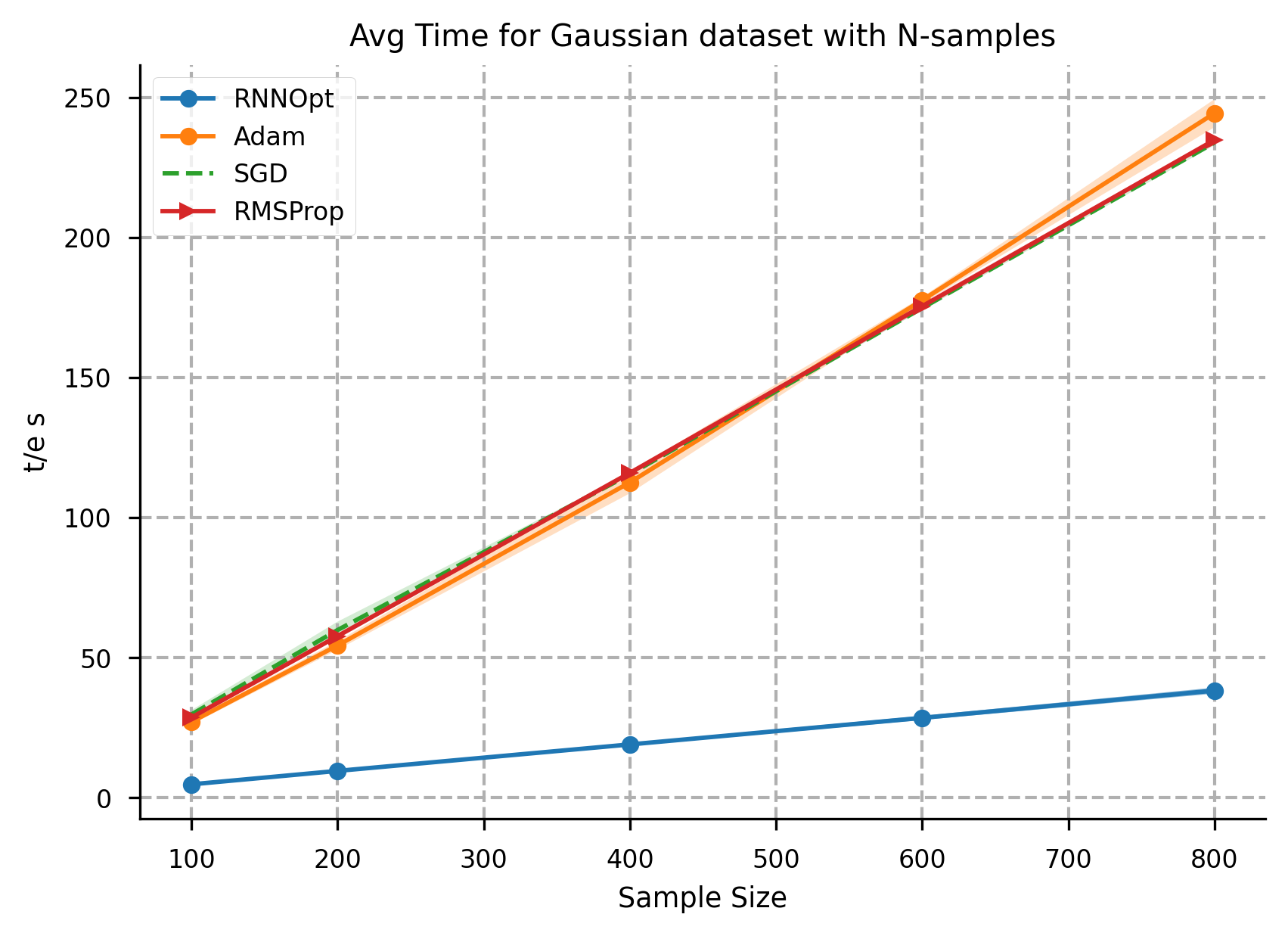}
    \caption{Time profiling results for different sizes of Gaussian datasets.}
    \label{fig:gauss_expt_results}
\end{figure}

In our earlier experiments we noticed that LSTM optimizer converges to a minima in fewer circuit evaluations than other methods. This implied that our method could be used to scale to data with larger number of points. To test this hypothesis, we perform experiments for a pattern classification task by generating $N$ samples from a Gaussian distribution of finite mean and covariance. We benchmarked the time per epoch (i.e. time it takes to go through an entire dataset) for our method against Adam, Gradient Descent Optimizer and RMSProp~\cite{tieleman2012lecture}. 

The results of our experiment are shown in Figure~\ref{fig:gauss_expt_results}. We vary the sample size $N = \{100, 200, 400, 600, 800 \}$ and measure the scaling of the time per epoch it takes for the optimizers. It is clear from the figure that the LSTM optimizer is able to scale to larger data instances without a significant increase in the time per epoch. In fact, our method is nearly \emph{three} times faster than any gradient based method on the same dataset. This result is a positive indication that development of novel gradient free methods is a novel research direction and improvements may lead to VQCs being able to scale to handle data at the scale which is currently being handled by classical algorithms. 

\section{Related Work}
Meta optimization is an actively researched area in the field of deep learning. Many different meta-optimization algorithms have been proposed. Notably,~\cite{andrychowicz2016learning} propose an LSTM based meta-optimizer that accepts the cost function gradient and the previous time parameters as inputs to a LSTM network and recover new parameters in the form of LSTM output while Ravi~\emph{et al.}~\cite{ravi2017optimization} embed parameters of neural network into the cell state of the LSTM and the gradient as the state of the candidate cell state. The output is then a non-linear combination of these two embeddings in the LSTMcell.  A reinforcement learning perspective is considered by~\cite{li2016learning, li2017efficient} where a guided policy search algorithm is used to find new parameters given information from previous time steps (over a finite horizon). Another direction in meta-optimization is considered by~\cite{bello2017neural} in the form of neural architecture search over the input space formed by the symbols in various update rules. They evaluate the viability of different optimization algorithms resulting from different combinations of the symbols and analyze the efficacy of the best performing combination. 
In the quantum case, there has been sparse but consistent work towards meta-optimization as well. The motivation for these works is towards exploring solutions to problems in quantum chemistry (e.g. finding the lowest energy Hamiltonian for a given system) and graph optimization (e.g. finding the parameters corresponding to the max-cut problem in QAOA). In~\cite{verdon2019learning}, the authors propose a strategy of learning the best initial parameters using the LSTM setup of~\cite{andrychowicz2016learning}. These initial parameters are then used in a quantum circuit to avoid barren plateaus and are optimized using conventional gradient based algorithms. Wilson~\emph{et al.}~\cite{wilson2021optimizing} propose a similar setup to~\cite{andrychowicz2016learning} and utilize similar inputs to the LSTM based meta-optimizer. Their assumption is based on the cheapness of gradient evaluation on the quantum circuit, which unfortunately for the NISQ devices is not the case. Another work by~\cite{khairy2020learning} proposes a reinforcement learning based approach in a Quantum Approximate Optimization Algorithm (QAOA) setting by trying to learn a policy that suggests optimal parameters for a max-cut problem with two clusters and SVM/SVR for predicting the circuit depth~\cite{shaydulin2021classical}. Their proposed method leverages PPO~\cite{schulman2017proximal} algorithm for estimating the optimal policy. Our work can also be interpreted as a variant of reinforcement learning method where we learn the policy directly instead of just suggesting the mean of a Gaussian policy. Second, our method makes no presumption on the \emph{size} of the action space unlike this work where the action space is limited to just two parameters. To the best of our knowledge, our work is the first to consider a gradient free meta-optimization algorithm with significantly different inputs and update strategies. Table~\ref{tab:relwrk} summarizes the existing meta-optimization methods in quantum computing and highlights the differences from our method.

\begin{table}[h]
    \centering
    \tiny
    \begin{tabular}{p{0.11\textwidth}|p{.09\textwidth}|p{.09\textwidth}|p{.09\textwidth}}
    \toprule
    Method & Quantum Gradient Computation & Replay Buffer & Adaptive Parameter Updates \\
    \midrule
    %% TODO: change the method name better
    Meta-optimization w/ Gradients~\cite{wilson2021optimizing}& \ding{51} & \ding{55} & \ding{55} \\  
    \hline
    Learning initial parameters w/ LSTMs~\cite{verdon2019learning} & \ding{51} & \ding{55} & \ding{55} \\ 
    \hline
    Ours & \ding{55} & \ding{51} & \ding{51}
    \end{tabular}
    \caption{Algorithmic differences between quantum meta-optimization algorithms proposed in literature.}
    \label{tab:relwrk}
\end{table}

\section{Conclusion}
We have proposed a gradient free meta optimization algorithm for quantum neural networks that can potentially scale up to larger dataset sizes in a reasonable amount of time. Experiments on different datasets show that our algorithm has a comparable quality to classical optimization algorithms while significantly outperforming them in terms of computation time. We believe that our method can be useful in exploring the answer to several open questions in the theory of variational quantum algorithms. For instance, the barren plateau problem~\cite{mcclean2018barren} is a notorious problem that occurs in randomly parametrized circuits even when their depth is shallow. It has been shown that initializing distributions of parameters play a key role in preventing barren plateaus~\cite{kulshrestha2022beinit}. It would be interesting to explore if meta-optimizers can suggest parameters that prevent occurrence of barren plateaus. In a future work, we would also like to study the generalization performance of QNNs when trained by meta-optimizers.

\section{DATA AVAILABILITY}

Data and code are available upon reasonable request from the authors.

% \section{AUTHOR CONTRIBUTIONS}

% The project was conceived by Ankit Kulshrestha and Xiaoyuan Liu. The latter provided guidance on experiments and evaluation of results. The authors thank Hayato Ushijima-Mwesigwa for his contribution towards the parameter mixing rule and the annealing hyperparameter suggestions. The authors would also like to thank Ilya Safro for his contributions in the overall suggestion of experiments and his invaluable comments on the overall approach. We also thank him for his specific suggestion of using a Spheres3d dataset as a benchmark for our algorithm.

% \section{COMPETING INTERESTS}
% The Authors declare no Competing Financial or Non-Financial Interests.

\bibliographystyle{IEEEtran}
\bibliography{ref}

%%%%%%%%%%%%%%%%%%%%%%%%%%%%%%%%%%%%%%%%%%%%%%%%%%%%%%%%%%%%%%%%%%%%%%%%%%%%%%%
%%%%%%%%%%%%%%%%%%%%%%%%%%%%%%%%%%%%%%%%%%%%%%%%%%%%%%%%%%%%%%%%%%%%%%%%%%%%%%%
% APPENDIX
%%%%%%%%%%%%%%%%%%%%%%%%%%%%%%%%%%%%%%%%%%%%%%%%%%%%%%%%%%%%%%%%%%%%%%%%%%%%%%%
%%%%%%%%%%%%%%%%%%%%%%%%%%%%%%%%%%%%%%%%%%%%%%%%%%%%%%%%%%%%%%%%%%%%%%%%%%%%%%%
\newpage
% % \appendix
% % \onecolumn

% \input{sections/supplementary.tex}
% \section{You \emph{can} have an appendix here.}

% You can have as much text here as you want. The main body must be at most $8$ pages long.
% For the final version, one more page can be added.
% If you want, you can use an appendix like this one, even using the one-column format.
% %%%%%%%%%%%%%%%%%%%%%%%%%%%%%%%%%%%%%%%%%%%%%%%%%%%%%%%%%%%%%%%%%%%%%%%%%%%%%%%
%%%%%%%%%%%%%%%%%%%%%%%%%%%%%%%%%%%%%%%%%%%%%%%%%%%%%%%%%%%%%%%%%%%%%%%%%%%%%%%

\end{document}